\newcommand\reallywidehat[1]{%
\savestack{\tmpbox}{\stretchto{%
  \scaleto{%
    \scalerel*[\widthof{\ensuremath{#1}}]{\kern-.6pt\bigwedge\kern-.6pt}%
    {\rule[-\textheight/2]{1ex}{\textheight}}
  }{\textheight}%
}{0.5ex}}%
\stackon[1pt]{#1}{\tmpbox}%
}
\colorlet{LightRubineRed}{RubineRed!70!}
\colorlet{Mycolor1}{green!10!orange!90!}
\definecolor{Mycolor2}{HTML}{420DAB} 
\definecolor{Mycolor3}{HTML}{000000} 
\algnewcommand\REQUIRED{\item[\textbf{Required:}]}%
\algnewcommand\INPUT{\item[\textbf{Input:}]}%
\algnewcommand\OUTPUT{\item[\textbf{Output:}]}%
\def\bg{{\bf g}}
\def\bc{{\bf c}}
\def\bo{{\bf o}}
\def\bm{{\bf m}}
\def\bs{{\bf s}}
\def\bx{{\bf x}}
\def\rs{{\mathsf{s}}}
\def\ro{{\mathsf{o}}}
\def\rx{{\mathsf{x}}}
\def\rc{{\mathsf{c}}}
\def\rm{{\mathsf{m}}}
\def\rtr{{\mathsf{tr}}}
\newtheorem{lemma}{Lemma}
\newtheorem{definition}{Definition}
 \title{\LARGE \textcolor{Mycolor3}{ Task-Effective Compression of Observations for the Centralized Control of a Multi-agent System Over Bit-Budgeted Channels} \\ 
\thanks{The authors are with the Centre for Security Reliability and Trust, University of Luxembourg, Luxembourg. Emails: \{arsham.mostaani, thang.vu, symeon.chatzinotas, bjorn.ottersten\}@uni.lu}
\thanks{This work is supported by European Research Council (ERC) via the project AGNOSTIC (Grant agreement ID: 742648).}
}
 \author{\IEEEauthorblockN{Arsham Mostaani, \IEEEmembership{Student Member,~IEEE},
 Thang X. Vu, \IEEEmembership{Senior Member,~IEEE}, \\
 Symeon Chatzinotas, \IEEEmembership{Fellow Member,~IEEE}, and Bj\"orn Ottersten, \IEEEmembership{Fellow Member,~IEEE} }
}
\begin{document}
\vspace{-3mm}
\maketitle

\vspace{-00mm}
\begin{abstract}
We consider a task-effective quantization problem that arises when multiple agents are controlled via a centralized controller (CC). While agents have to communicate their observations to the CC for decision-making, the bit-budgeted communications of agent-CC links may limit the task-effectiveness of the system which is measured by the system's average sum of stage costs/rewards. As a result, each agent should compress/quantize its observation such that the average sum of stage costs/rewards of the control task is minimally impacted. We address the problem of maximizing the average sum of stage rewards by proposing two different Action-Based State Aggregation (ABSA) algorithms that carry out the indirect and joint design of control and communication policies in the multi-agent system. While the applicability of ABSA-1 is limited to single-agent systems, it provides an analytical framework that acts as a stepping stone to the design of ABSA-2. ABSA-2 carries out the joint design of control and communication for a multi-agent system. 
We evaluate the algorithms - with average return as the performance metric - using numerical experiments performed to solve a multi-agent geometric consensus problem. The numerical results are concluded by introducing a new metric that measures the effectiveness of communications in a multi-agent system.

\end{abstract}

\begin{IEEEkeywords}
 Semantic communications, task-effective data compression, goal-oriented communications, communications for machine learning, multi-agent systems, reinforcement learning.
\end{IEEEkeywords}

\vspace{-5mm}
\section{Introduction}
\vspace{-2mm}

As 5G is rolling out, a wave of new applications such as the internet of things (IoT), industrial internet of things (IIoT) and autonomous vehicles is emerging. It is projected that by 2030, approximately 30 billion IoT devices will be connected \cite{vailshery_2022}. With the proliferation of non-human types of connected devices, the focus of the communications design is shifting from traditional performance metrics, e.g., bit error rate and latency of communications to the semantic and task-oriented performance metrics such as meaning/semantic error rate \cite{Guler2018semantic, tong2021federatedsemantic} and the timeliness of information \cite{papas2021goal}. To evaluate how efficiently the network resources are being utilized, one could traditionally measure the sum rate of a network whereas in the era of the cyber-physical systems, given the resource constraints of the network, we want to understand how effectively one can conduct a (number of) task(s) in the desired way \cite{CALVANESE2021Semantic, mostaani2021task}. We are witnessing a paradigm shift in communication systems where the targeted performance metrics of the traditional systems are no longer valid. This imposes new grand challenges in designing the communications towards the eventual task-effectiveness  \cite{mostaani2021task}. This line of research is also driven partly due to the success of new machine learning technologies/ algorithms under the title of "emergent communications" in multi-agent systems \cite{FoersterLearning}. Transfer of these new technologies/ideas to communication engineering is anticipated to have a disruptive effect in multiple domains of the design of communication systems.

According to Shannon and Weaver, communication problems can be divided into three levels \cite{shannon1959mathematical}: (i) technical problem: given channel and network constraints, how accurately can the communication symbols/bits be transmitted? (ii) semantic problem: given channel and network constraints, how accurately the communication symbols can deliver the desired meaning? (iii) effectiveness problem: given channel and network constraints, how accurately the communication symbols can help to fulfil the desired task? While the traditional communication design addresses the technical problem, recently, the semantic problem \cite{Guler2018semantic, 8944272, 9804827, tong2021federatedsemantic,CALVANESE2021Semantic} as well as the effectiveness problem \cite{mostaani2021task,tung2021effective,mota2021emergence,shlezinger2021deep,gutierrez2022learning,zhang2022goal,shlezinger2020task,mostaani2019Learning,mostaani2022task} have attracted extensive research interest.

 \begin{figure}[t!] 
  \centering 
      \includegraphics[width=0.999 \textwidth]{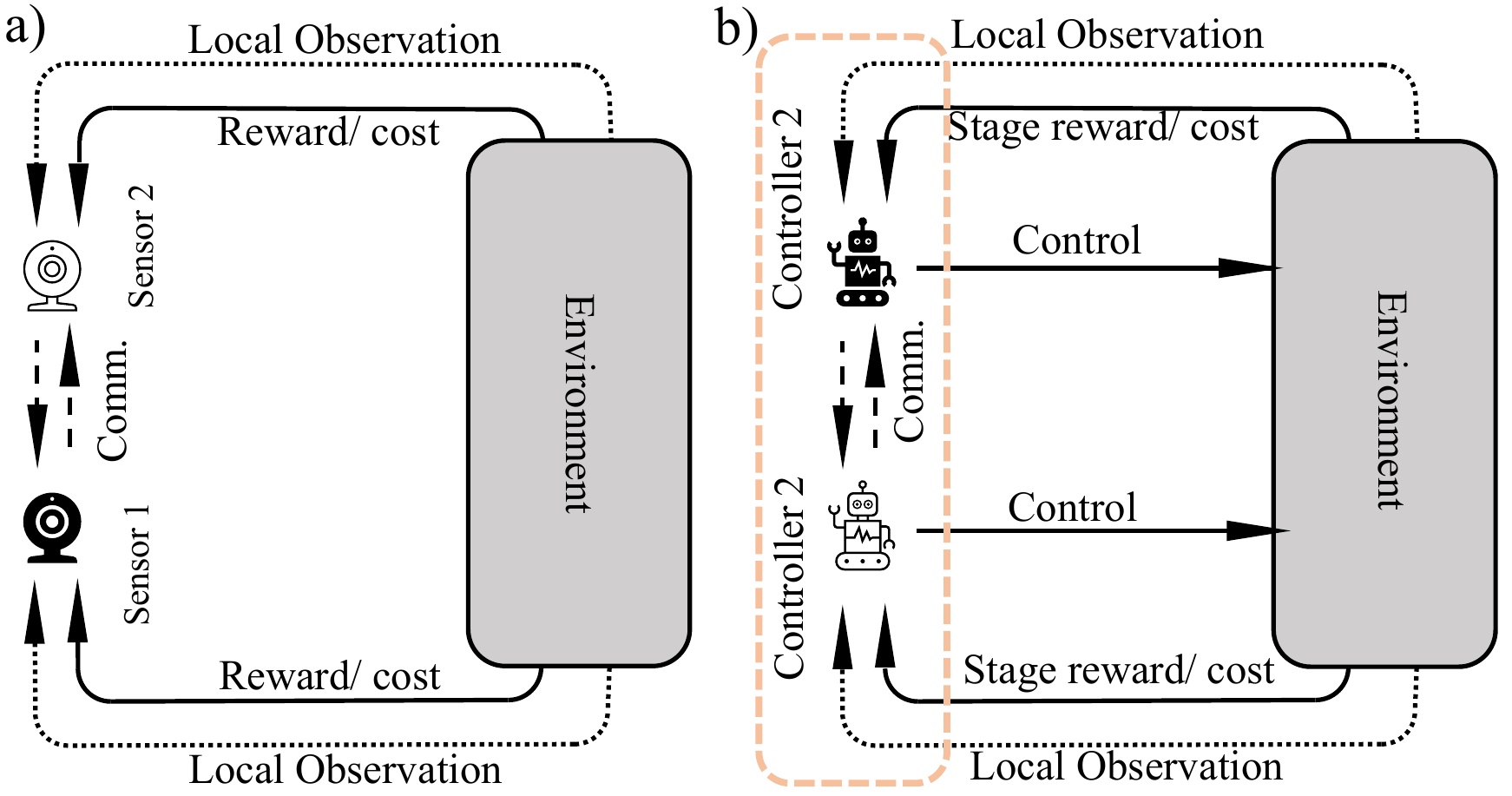} 
      \vspace{-2mm}
  \caption{Task-effective communications for a) an estimation vs. b) a control task - the orange dashed box is detailed in Fig. \ref{fig: communication topology} and Fig. \ref{fig: System model}.}
  \label{fig: control vs estimation}
  \vspace{-0.05cm}
\end{figure}

In contrast to Shannon’s technical-level communication framework, semantic communication can enhance performance by exploiting prior knowledge between source and destination \cite{Marios2021Semantic,papas2021goal}. The semantic-based designs, however, are not necessarily task-effective \cite{carnap1952outline}. One can design transmitters which compress the data with the least possible compromise on the semantic meaning being transmitted \cite{Guler2018semantic,tong2021federatedsemantic} while the transmission can be task-unaware \cite{zhang2022deep}. In contrast to semantic level and technical level communication design, the performance of a task-effective communication system is ultimately measured in terms of the average return/cost linked to the task \cite{tung2021effective}. In the (task-)effectiveness problem, we are not concerned only about the communication of meaning but also about how the message exchange is helping the receiving end to improve its performance in the expected cost/reward of an estimation task \cite{shlezinger2021deep,shlezinger2020task, stavrou2022rate,papas2021goal,gutierrez2022learning} or a control task \cite{mostaani2019Learning,tung2021effective,mota2021emergence,mostaani2020state,gutierrez2022learning,mostaani2022task,kim2019schedule}.

There are fundamental differences between the design of task-effective communications for an estimation vs. a control task - Fig. \ref{fig: control vs estimation}. (i) In the latter, each agent can produce a control signal that directly affects the next observations of the agent. Thus, in control tasks the source of information - local observations of the agent - is often a stochastic process with memory - e.g. linear or Markov decision processes - \cite{mostaani2019Learning,tung2021effective,mostaani2022task}. In the estimation tasks, however, the source of information is often assumed to be an i.i.d. stochastic process \cite{shlezinger2021deep,shlezinger2020task, stavrou2022rate}. (ii) In the control tasks, a control signal often has a long-lasting effect on the state of the system more than for a single stage/time step e.g., a control action can result in lower expected rewards in the short run but higher expected rewards in the long run. This makes the control tasks intrinsically sensitive to the time horizon for which the control policies are designed. Estimation tasks, specifically when the observation process is i.i.d., can be solved in a single stage/ time step - since there is no influence from the solution of one stage/ time step to another i.e., each time step can be solved separately \cite{stavrou2022rate,liu2022indirect}. (iii) The cost function for estimation tasks is often in the form of a difference/distortion function while in the control tasks it can take on many other forms.

In this paper, we focus on the effectiveness problem for the control tasks. In particular, we investigate the distributed communication design of a multiagent system (MAS) with the ultimate goal of maximizing the expected summation of per-stage rewards also known as the expected return. Multiple agents select control actions and communicate in the MAS to accomplish a collaborative task with the help of a central controller (CC) - i.e. the communication network topology of the MAS is a star topology with the hub node being the central controller and the peripheral nodes being the agents - Fig. \ref{fig: communication topology}. The considered system architecture can find applications in several domains such as Internet of Things, emerging cyber-physical systems, real-time interactive systems, vehicle-to-infrastructure communication \cite{chou2009feasibility} and collaborative perception \cite{liu2020who2com}. 

\subsection{Related works: Task-effective communications for control tasks}
Authors in \cite{mostaani2019Learning,mostaani2020state,mostaani2022task,mota2021emergence,tung2021effective,kim2019schedule,gutierrez2022learning} consider task-effective communication design under different settings. While \cite{mota2021emergence}, utilizes the task-effective communication design for the specific problem of the design of application-tailored protocols over perfect communication channels, the communication channel is considered to be imperfect in \cite{mostaani2019Learning,mostaani2020state,mostaani2022task,tung2021effective,kim2019schedule,gutierrez2022learning}. Authors in \cite{gutierrez2022learning} provide algorithmic contributions to the design of task-effective joint source channel coding for single agent systems. Task-effective joint source and channel coding for MAS is targeted by \cite{mostaani2019Learning, tung2021effective,gutierrez2022learning}, whereas \cite{mostaani2020state,mostaani2022task} are focused on task-effective data compression and quantization. Similar to the current paper, a star topology for the inter-agent communication is considered in \cite{mota2021emergence, tung2021effective} whereas \cite{mota2021emergence} assumes perfect communications between the hub node and the peripherals and \cite{tung2021effective} assumes imperfect communication channels at the down-link of the peripheral nodes. In contrast to all the above-mentioned work, this paper is - to the best of our knowledge - the first to study the star topology with the uplink (agent to hub) channel be imperfect (bit-budgeted) - Fig. \ref{fig: communication topology}. Accordingly, each agent observes the environment and communicates an abstract version of its local observation to the CC via imperfect (bit-budgeted) communication channels - red links in Fig. \ref{fig: communication topology}. Subsequently, CC produces control actions that are communicated to the agents via perfect communication channels - black links in Fig. \ref{fig: communication topology}. The control actions are selected by the CC such that they maximize the average return of the collaborative task, where the return is a performance metric linked to the accomplishment of the task. 

 \begin{figure}[t!] 
  \centering 
      \includegraphics[width=0.999 \textwidth]{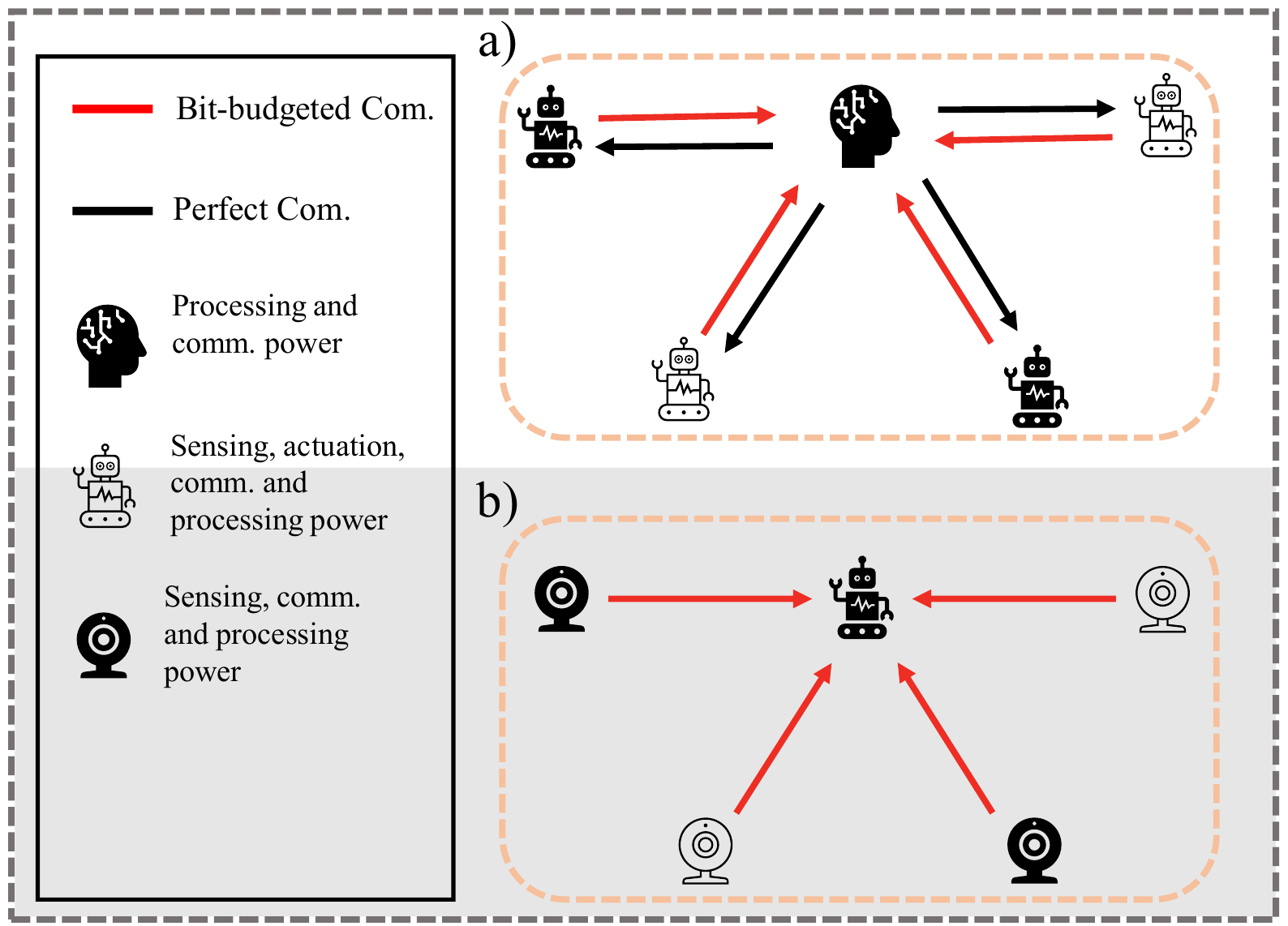} 
      \vspace{-2mm}
  \caption{Communication topology and its applicable scenarios a) Centralized control of an MAS with collocated actuators and sensors, b) Distributed sensing with a single controller collocated with a single actuator. The orange dashed box is detailing the same box in Fig. \ref{fig: control vs estimation} and Fig. \ref{fig: System model} .}
  \label{fig: communication topology}
  \vspace{-0.05cm}
\end{figure}

\subsection{Contributions}

In our earlier work \cite{mostaani2022task}, we have developed a generic framework to solve task-oriented communication problems - for a multi-agent system (MAS) with full mesh connectivity. The current work can be considered as an adoption of that framework to a new problem setting for the design of task-effective communications where agents follow a star network topology for their connectivity. In this direction, the current work transcends the applicability of the proposed framework beyond the specific problem that was solved in \cite{mostaani2022task} and provides further insights into how the framework can be used in wider terms and under a wider range of settings. In particular the contributions of this work are listed below.
\begin{itemize}
\item Firstly, we consider a novel problem setting in which an MAS is controlled via a central controller who has access to agents' local observations only through bit-budgeted distributed communications. This problem setting can be used in collaboration perception systems as well as vehicle-to-infrastructure communications, which cannot been addressed by the problem settings investigated in the prior similar art.

\item Secondly, our analytical studies establish the relationship between the considered joint communication and control design problem and conventional data quantization problems. In particular, lemma \ref{lemma: quantization measure} shows how the problem approached in this paper is a generalized version of the conventional data quantization. This formulation is useful as it helps to find an exact solution to the problem under stronger conditions via ABSA-1 and under milder conditions via ABSA-2.


\item Moreover, our analytical studies help us to craft an indirect \footnote{By an indirect algorithm here we mean an approach that is not dependent on our knowledge from a particular task. Indirect approaches are applicable to any/(wide range of) tasks. In contrast to indirect schemes, we have direct schemes that are specifically designed for a niche application  \cite{shlezinger2020task}. As defined by \cite{mostaani2021task}: "the direct schemes aim at guaranteeing or improving the performance of the cyber-physical system at a particular task by designing a task-tailored communication strategy".} task-effective data quantization algorithm - ABSA-2. Designing a task-effective data quantization for ABSA-2 can equivalently be translated as an indirect approach to feature selection for an arbitrary deep Q-network. Relying on the analysis carried out for ABSA-1, ABSA-2 designs distributed and bit-budgeted communications between the agents and CC. ABSA-2 is seen to approach optimal performance by increasing the memory of the CC. In fact, increasing the memory of CC leads to higher computational complexity. Therefore, ABSA-2 is said to strike a trade-off between computational complexity and task efficiency.

\item Numerical experiments are carried out on a geometric consensus task to evaluate the performance of the proposed schemes in terms of the optimality of the MAS's expected return in the task. ABSA-1 and ABSA-2 are compared with several other benchmark schemes introduced by \cite{mostaani2022task}, in a multi-agent\footnote{ Due to the complexity related issues explained in section \ref{Numerical results - Section}, the numerical results are limited to two-agent and three-agent scenarios.} scenario with local observability and bit-budgeted communications.

\item Finally, we will introduce a new metric, called task relevant information, for the measurement of effectiveness in task-oriented communication policies that - in comparison with the existing metrics such as positive listening and positive signalling - better explains the behaviour of a variety of task-effective communication schemes. The proposed metric is capable of measuring the effectiveness of a task-oriented communication/compression policy without the need of testing a jointly designed control policy and testing the jointly designed policies in the desired task.
\end{itemize}

\subsection{Technical approach}
 Our goal is to perform an efficient representation of the agents' local observations to ensure meeting the bit-budget of the communication links while minimizing the effect of quantization on the average return of the task. 
 \textcolor{Mycolor3}{ To achieve this, we first need to design task-effective data quantization policies for all agents. In task-effective data quantization, one needs to take into account the properties of the average return function and the optimal control policies associated with
the task \cite{zhang2022goal}. In addition to the design of the quantization policies for all agents, we also need the control policy of the CC to be capable of carrying out near-optimal decision-making despite its mere access to the quantized messages - resulting in a joint control and data compression problem. We formulate the joint control and data compression problem as a generalized form of data compression: task-oriented data compression (TODC).} Following this novel problem formulation, we propose two indirect action-based state aggregation algorithms (ABSA): (i) ABSA-1 provides analytical proof for a task-effective quantization i.e, with optimal performance in terms of the expected return. In this direction, ABSA-1 relaxes the assumption of the lumpability of the underlying MDP, according to which \cite{mostaani2022task}[condition. 6], the performance guarantees of the proposed method were established. Since ABSA-1 is only applicable when the system is composed of one agent and the CC we also propose ABSA-2. Following the analytical results of ABSA-1, given the help of MAP estimation to relax the aforementioned limitation of ABSA-1, and benefiting from a DQN controller at the CC; ABSA-2 will be introduced as a more general approach. (ii) ABSA-2 solves an approximated version of the TODC problem and carries out the quantization for any number of agents communicating with the CC. Thanks to a deep Q-network controller utilized at the CC, ABSA-2 can solve more complex problems where the controller benefits from a larger memory. Thus, ABSA-2 allows trading complexity for communication efficiency and vice versa. Finally, we will evaluate the performance of the proposed schemes in the specific task: a geometric consensus problem under finite observability \cite{barel2017come}.
 
 \subsection{Organization}
The rest of this paper is organized as follows. Section II describes the MAS and states the joint control and communication problem. Section III proposes two action-based state aggregation algorithms. Section IV shows the performance of the proposed algorithms in a geometric consensus problem. Finally, Section V concludes the paper. 
For the reader's convenience, a summary of the notation that we follow in this paper is given in Table \ref{table-notation}. Bold font is used for matrices or scalars which are random and their realizations follow simple font.

\begin{table}[b]
\caption{Table of notations}
\centering
 \begin{tabular}{||c c ||} 
 \hline
 Symbol & Meaning \\ [0.5ex] 
 \hline\hline
 \small{$\bx(t)$} & \small{A generic random variable generated at time $t$}  \\ 
 \hline
 \small{$\rx(t)$} & \small{Realization of $\bx(t)$}  \\
 \hline
 \small{$\mathcal{X}$} & \small{Alphabet of \bx(t)}  \\
 \hline
 \small{$|\mathcal{X}|$} & \small{Cardinality of $\mathcal{X}$}  \\
 \hline
 \small{$p_{\bx}\big(\rx(t)\big)$} & \small{Shorthand for $\mathrm{Pr}\big(\bx(t) = \rx(t) \big)$}  \\  
 \hline
 \small{$H\big(\bx(t)\big)$} & \small{Information entropy of $\bx(t) $ (bits)}  \\  
 \hline
  \small{$\mathcal{X}_{-\bx}$} & \small{ $\mathcal{X} - \{\bx\}$}  \\ [1ex]
 \hline
   \small{$\mathbb{E}_{p(\rx)}\{\bx\}$} & \small{\makecell{Expectation of the random variable $X$ over the \\ probability distribution $p(\rx)$}}  \\ [1ex]
 \hline
   \small{$\rtr(t) $} & \small{Realization of the system's trajectory at time $t$}  \\ [1ex]
 \hline
\end{tabular}
\label{table-notation}
\end{table}

\vspace{-5mm}
 \section{System model and problem statement} \label{System model - Section}
  \vspace{-3mm}
The problem setting we introduce here can be used to analyse both scenarios illustrated in Fig. \ref{fig: communication topology}. Nevertheless, to use our language consistently, we focus on scenario (a) of that figure throughout the manuscript. In particular, when we use the term "agent" we refer to an object which certainly has all the following hardware capabilities: sensing, actuation, communication and data processing. A MAS, however, may not be comprised of mere agents, but of a combination of agents and perhaps other objects that has at least the hardware capabilities for communication and data processing power. The central controller here is supposed to have the hardware capability to process relatively larger data as well as the capability of communications. The interactions inside the MAS and outside the MAS with the environment are illustrated in Fig. \ref{fig: System model}.

\subsection{System model}\label{subsect: system model}
We consider a MAS in which  multiple agents $i \in \mathcal{N} = \{1,2,..., N\}$ collaboratively solve a task with the aid of a CC. \textcolor{Mycolor3}{ Following a centralized action policy, CC provides the agents with their actions via a perfect communication channel while it receives the observations of agents through an imperfect communication channel \footnote{In this work we follow a common assumption used in the networked control literature \cite{tatikonda2004control} according to which the bit-budget only limits the uplink communications of the agents and not their downlink. Accordingly, the agents select their control actions as is dictated to them by the central controller.}. The considered setting is similar to conventional centralized control of MASs \cite{mostaani2022task, FoersterCounter}, except for the fact that the communications from the agents to the CC are transmitted over a bit-budgeted communication channel.} The agent-hub communications are considered to be instantaneous and synchronous \cite{mostaani2022task}. This is in contrast with the delayed \cite{mostaani2019Learning,oliehoek2016concise} and sequential/iterative communication models \cite{ding2021sequential,albowicz2001recursive,dorvash2013stochastic}. We note that there is no direct inter-agent communication in the considered system - communications occur only between agents and the central controller. The system runs on discrete time steps $t$. \textcolor{Mycolor3}{ The observation of each agent $i$ at time step $t$ is shown by $\bo_i(t)\in \Omega$ and the state $\bs(t) \in \mathcal{S}$ of the system is defined by the joint observations $\bs(t) \triangleq \langle \bo_1(t),\dots, \bo_N(t)   \rangle $\footnote{According to this definition, at any given time $t$ the observations of any two agent $i,j \in \mathcal{N}$ are linearly independent in the Euclidean space. The same conditions are true for the control actions of arbitrary agents.} .} The control action of each agent $i$ at time $t$ is shown by $\bm_i (t) \in \mathcal{M}$, and the action vector $\bm(t) \in \mathcal{M}^N $ of the system is defined by the joint actions $\bm(t) \triangleq \langle \bm_1(t), ..., \bm_N(t)   \rangle $. The observation space $\Omega$, state-space $\mathcal{S}$, and action space $\mathcal{M}$ are all discrete sets.
 \begin{figure}[t] 
  \centering 
      \includegraphics[width=0.95\textwidth]{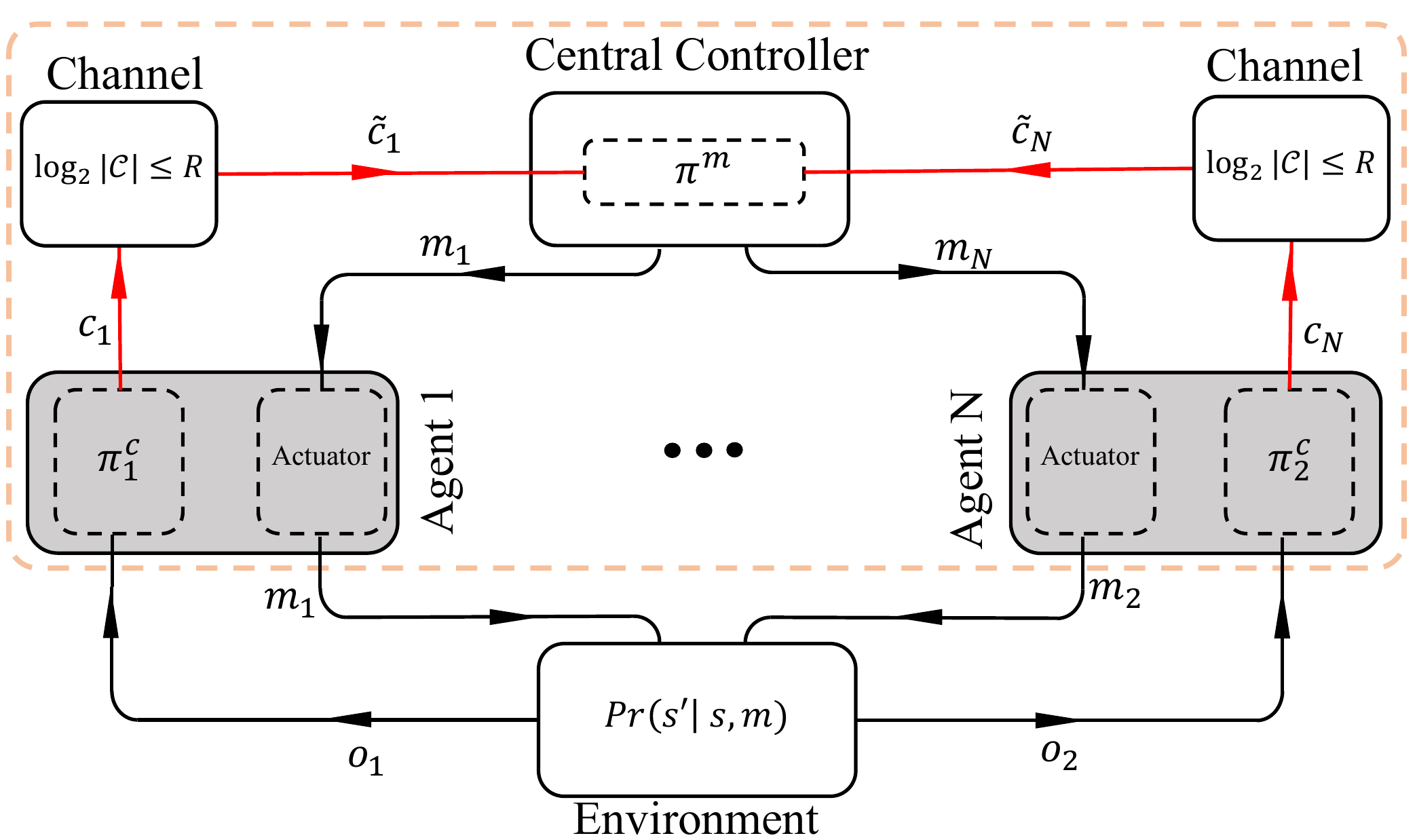} 
      \vspace{-2mm}
  \caption{Illustration of the interactions of the CC and agents for the control of the environment. The red link shows the communication channels that are bit-budgeted - implying the local (and not global) observability of the CC. The orange dashed box is detailing the same box in Fig. \ref{fig: control vs estimation} and Fig. \ref{fig: communication topology} .}
  \label{fig: System model}
  \vspace{-0.05cm}
\end{figure}
\textcolor{Mycolor3}{ The environment is governed by an underlying\footnote{\textcolor{Mycolor3}{ As defined in the literature [10], the underlying MDP’ is the horizon-$T'$ MDP defined by a hypothetical single agent that takes joint actions
$\bm(t) \in \mathcal{M}^N $ and observes the nominal state $\bs(t) \triangleq \langle \bo_1(t),\dots, \bo_N(t)   \rangle $ that has the same transition model $T(\cdot)$ and reward
model $r(\cdot)$ as the environment experienced by our MAS.}} Markov Decision Process that is described by the tuple $M = \big{\{} \mathcal{S},\mathcal{M}^N, r(\cdot), \gamma, T(\cdot)  \big{\}}$, where $r(\cdot): \mathcal{S} \times \mathcal{M}^N \rightarrow \mathbb{R}$ is the per-stage reward function and the scalar $ 0\leq\gamma \leq 1 $ is the discount factor.} The function $T(\cdot): \mathcal{S}\times \mathcal{M}^N \times \mathcal{S} \rightarrow [0,1]$ is a conditional probability mass function (pmf) which represents state transitions such that \textcolor{Mycolor3}{ $T\big(\rs(t+1), \rs(t) , \rm(t) \big) = \mathrm{Pr}\big( \rs(t+1) | \rs(t) , \rm(t) \big)$}. According to the per-stage reward signals, the system's return within the time horizon $T'$ is denoted by
{\small\begin{equation} \label{eq: cumulative rewards}
    \bg(t^{'})= {\sum}_{t=t^{'}}^{T'}\gamma^{t-1} r\big(\bo_1(t), ...,\bo_N(t),\bm_1(t), ...,\bm_N(t)\big).
\end{equation}}
While the system state is jointly observable by the agents \cite{pynadath2002communicative}, each agent $i$'s observation $\bo_i(t)$ is local \footnote{In our problem setting, each agent does not see the environment as an MDP due to their local observability. We only assume the presence of an underlying MDP for the environment, which is widely adopted in the literature for the reinforcement learning algorithm, e.g., \cite{oliehoek2007dec} \cite{lowe2019pitfalls}.
We have this assumption as our performance guarantees rely on the optimality of the solution provided for the control task, which is also assumed in [7], [10]. Let us recall that throughout all of our numerical studies, even the CC, given joint observations of all agents, cannot observe the true/nominal state of the environment.}. Once per time step, agent $i \in \mathcal{N}$ is allowed to transmit its local observations through a communication message $\bc_i(t)$ to the CC. The communications between agents and the central controller are done in a synchronous (not sequential) and simultaneous (not delayed) fashion \cite{mostaani2019Learning}. Each agent $i$ generates its communication message $\bc_i(t)$ by following its communication policy $\pi^c_i(\cdot) : \Omega \rightarrow \mathcal{C}$. In parallel to all other agents, agent $i$ follows the communication policy $\pi^c_i(\cdot)$ to map its current observation $\bo_i(t)$ to the communication message $\bc_i(t)$ which will be received by the central controller in the same time-step $t$. The code-book $\mathcal{C}$ is a set composed of a finite number of communication code-words s $\rc, \rc', \rc'', ..., \rc^{( |\mathcal{C}| -1)}$ - we use the same notation to refer to the different members of the action, observation and state spaces too. Agents' communication messages are sent over an error-free finite-rate bit pipe, with its rate constraint to be $R \in \mathbb{R}$ (bits per channel use) or equivalently (bits per time step). As a result, the size of the quantization codebook should follow the inequality $ |\mathcal{C}|\leq 2^R $. The CC exploits the received communication messages $\bc(t) \triangleq \langle {\bc}_1(t),...,  {\bc}_N(t) \rangle$  within the last $d$ number of time-steps to generate the action signal $\rm(t)$ following the control policy $\pi^m(\cdot): \mathcal{C}^{Nd} \rightarrow \mathcal{M}^N$. \textcolor{Mycolor3}{ Based on the above description, the environment from the point of view of the CC as well as from the agent's point of view is not necessarily an MDP - as none is capable of viewing the nominal state of the environment.}

\vspace{-0mm}
\subsection{Problem statement: Joint Control and Communication Design (JCCD) problem}
\textcolor{Mycolor3}{ Now we define the JCCD problem.} Let $M$ be the MDP governing the environment and the scalar $R \in \mathbb{R}$ to be the bit-budget of the uplink of all agents. At any time step $t'$, we aim at selecting the tuple $\pi = \langle \pi^m(\cdot), \pi^c\rangle $ with $ \pi^c \triangleq \langle \pi^c_1(\cdot),..., \pi^c_N(\cdot) \rangle$ to solve the following variational dynamic programming
\begin{align}\label{eq: Joint Control and Communication Design (JCCD) problem}
 \underset{\pi}{\textnormal{argmax}} ~~ 
\mathbb{E}_{\pi}                  \Big\{
   \bg(t') 
\Big\};~~
 \textnormal{s.t.} ~~  
    |\mathcal{C}|\leq 2^R,
\end{align}
where the expectation is taken over the joint pmf of the system's trajectory $\{\rtr\}_{t'}^{T'} = \ro_1(t'),..., \ro_N(t'), m(t'), ..., \ro_1(T'),..., \ro_N(T'), \rm(T')$, when the agents follow the policy tuple $\pi$. In the next section, similar to \cite{mostaani2022task} we will disentangle the design of action and communication policies via action-based quantization of observations. In contrast to \cite{mostaani2022task}, here the communication network of the MAS is assumed to follow a star topology. The idea behind this disentanglement is to extract the features of the control design problem that can affect the communication design and to take them into account while designing the communications. Thus our communication design will be aware of the key features of the control task. We extract the key features of the control task using analytical techniques as well as reinforcement learning \cite{mostaani2022task, mostaani2019Learning}. In fact, the new communication problem called TODC, will no longer be similar to the conventional communication problems, as it is inspired by the JCCD problem.

In \cite{mostaani2020state, mostaani2022task}, authors use the value of agents' observations for the given task as the key feature of the control task considered in the communication design. Accordingly, the idea was to cluster together the observation points that have similar values. In contrast to \cite{mostaani2020state, mostaani2022task}, which considers the value of observations as an explicit key feature of the control task, here we consider the optimal control/action values assigned to each observation as the key feature. Accordingly, ABSA clusters the observation values together, whenever the observation points have similar optimal control/action values assigned to them. Action-based state aggregation has been already introduced in the literature of reinforcement learning as a means for reducing the complexity of the reinforcement learning algorithms while maintaining the average return performance \cite{li2006towards,mccallum1996reinforcement}.  



\vspace{-3mm}
\section{Action-based Lossless compression of observations}
\vspace{-2mm}

In this section, we will set yet another example - in addition to \cite{mostaani2022task} - for the use of a generic framework to solve JCCD problem. In \cite{mostaani2022task}, a similar problem is solved for distributed control and quantization, wherein, the authors disentangle the design of task-oriented communication policies and action policies given the aid of a hypothetical functional $\Pi^{m^*}$. In particular, the functional $\Pi^{m^*}$ is a map from the vector space $\mathcal{K}^c$ of all possible communication policies $ \pi^c$ to the vector space $\mathcal{K}^m$ of optimal corresponding control policy $\pi^{m^*}(\cdot)$. Upon the availability of the functional $\Pi^{m^*}$, wherever the function $\pi^m$ appears in the JCCD problem, it can be replaced with $\Pi^{m^*}(\pi^c)$ resulting in a novel problem in which only the communication policies $\pi^c$ are to be designed.  While in \cite{mostaani2022task}, authors use an approximation of $\Pi^{m^*}(\pi^c)$ to obtain a task-oriented quantizer design problem, in the current work we derive an exact solution for a simplified version of (\ref{eq: TBIC problem - adam's law applied - 4 - body}) - where the number of agents communicating with the central controller is limited to one agent. To adapt ABSA to the generic setting of the problem (\ref{eq: TBIC problem - adam's law applied - 4 - body}), in ABSA-2, we will lift this limitation given the aid of an approximation technique.

The JCCD problem can already be formulated as a form of data-quantization problem. Lemma \ref{lemma: quantization measure}, identifies the quantization metric that we aim to optimize in this paper. It  reformulates the JCCD problem as a novel generalized data quantization problem.

\begin{lemma} \label{lemma: quantization measure}
 The JCCD problem (\ref{eq: Joint Control and Communication Design (JCCD) problem}) can also be expressed as a generalized data quantization problem as follows
 {\small
\begin{align}\label{eq: TBIC problem - adam's law applied - 4 - body}
& \underset{\pi}{\text{argmin }} 
& & \mathbb{E}_{p(\rs(t))}                   \Big{|}
    V^{\pi^*}\big(\bs(t) \big) 
-                
    V^{ {\pi}^m}\big(\bc(t) \big)  
\Big{|} , ~~ \text{s.t.} ~~ |\mathcal{C}|\leq 2^R,
\end{align}}
where the communication vector $\bc(t)$ generated by $\pi^c$ is a quantized version of the system's state $\bs(t)$.
\end{lemma}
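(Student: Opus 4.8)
The plan is to rewrite the JCCD objective as the gap between the unconstrained optimal return of the underlying MDP $M$ and the return actually collected by the rate-limited scheme, and then to observe that this gap is nonnegative, so that it equals its own absolute value. Let $V^{\pi}(\cdot)$ denote the value function consistent with the return in (\ref{eq: cumulative rewards}): $V^{\pi^*}(\rs)$ is the optimal value of $M$ at state $\rs$, and $V^{\pi^m}(\rc)$ is the expected return the CC collects when it applies $\pi^m$ to a received codeword (history), where $\bc(t)$ in (\ref{eq: TBIC problem - adam's law applied - 4 - body}) abbreviates the length-$Nd$ codeword history actually fed to $\pi^m(\cdot):\mathcal{C}^{Nd}\to\mathcal{M}^N$.

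First I would condition on the time-$t'$ state and invoke the tower rule (hence the ``Adam's law'' in the label), $\mathbb{E}_\pi\{\bg(t')\}=\mathbb{E}_{p(\rs(t'))}\big\{\mathbb{E}_\pi\{\bg(t')\mid \bs(t')\}\big\}$. Under $\pi=\langle\pi^m,\pi^c\rangle$ the CC's action -- and therefore all downstream transitions and rewards -- depends on the current observations only through the codeword history $\bc(t')$ that $\pi^c$ generates from them; integrating the inner conditional expectation against $p(\rs(t')\mid\bc(t'))$ and folding the outer expectation back then gives $\mathbb{E}_\pi\{\bg(t')\}=\mathbb{E}_{p(\rs(t'))}\{V^{\pi^m}(\bc(t'))\}$.

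Second, $c^\star\triangleq\mathbb{E}_{p(\rs(t'))}\{V^{\pi^*}(\bs(t'))\}$ is a finite constant that does not depend on the choice of $\pi$, so maximizing $\mathbb{E}_\pi\{\bg(t')\}$ subject to $|\mathcal{C}|\le 2^R$ is equivalent to minimizing $c^\star-\mathbb{E}_\pi\{\bg(t')\}=\mathbb{E}_{p(\rs(t'))}\{V^{\pi^*}(\bs(t'))-V^{\pi^m}(\bc(t'))\}$ over the same feasible set. Finally, since $\pi^*$ is optimal for $M$ and is moreover granted the true state while $\pi^m$ sees only the rate-limited message, degrading the controller's information cannot raise the attainable return, so the integrand is nonnegative and hence equals $\big|V^{\pi^*}(\bs(t'))-V^{\pi^m}(\bc(t'))\big|$; inserting the modulus therefore does not change the optimizer. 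Relabelling $t'$ as $t$ yields (\ref{eq: TBIC problem - adam's law applied - 4 - body}), with the constraint $|\mathcal{C}|\le 2^R$ carried over unchanged.

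The main obstacle is making the conditioning of the second paragraph and the sign claim of the third precise: one must argue that, given $\pi$, the codeword history is a statistic of the state that screens off everything the CC influences afterwards -- so that $\mathbb{E}_\pi\{\bg(t')\mid\bs(t')\}$ may legitimately be rewritten through $V^{\pi^m}(\bc(t'))$ under $p(\rs(t')\mid\bc(t'))$ -- and that the optimal-value dominance holds state by state (not merely on average), which is exactly what makes the absolute value cosmetic rather than restrictive. Both rest on the single monotonicity fact that coarsening the controller's observation cannot increase the return, but stating it carefully for the finite-horizon, memory-$d$ controller, and pinning down precisely what $\bc(t)$ and $V^{\pi^m}(\cdot)$ denote, is the delicate part.
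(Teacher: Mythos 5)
Your proposal is correct and follows essentially the same route as the paper's proof: apply the tower rule (Adam's law), subtract the $\pi$-independent constant $\mathbb{E}_{p(\rs(t))}\{V^{\pi^*}(\bs(t))\}$ to turn the argmax into an argmin of a value gap, push the expectation over $\bc(t)$ back onto $p(\rs(t))$ using the fact that $\pi^c$ deterministically generates $\bc(t)$ from $\bs(t)$, and invoke the pointwise nonnegativity $V^{\pi^*}(\bs(t)) - V^{\pi^m}(\bc(t)) \geq 0$ to insert the absolute value. The delicate points you flag at the end (the screening-off of the state by the codeword history and the state-by-state dominance) are exactly the steps the paper also asserts without further elaboration.
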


\begin{proof}
Appendix \ref{append: proof of quantization measure}.
\end{proof}
In contrast to the classic data-quantization problems, here the distortion metric, measures the difference between two different functions of the original signal and its quantized version - namely $V^{\pi^*(\cdot)}$ and $V^{\pi^m(\cdot)}$ - thus the distortion measure that we aim to optimize by solving (\ref{eq: TBIC problem - adam's law applied - 4 - body}) is not conventional. In fact, the variational minimization problem is solved over the vector space of joint quantization policies $\pi^c$ and action policy $\pi^m$ functions.

\vspace{-4mm}
\subsection{ABSA-1 Algorithm} \label{subsect: ABSA-1}
\vspace{-2mm}
The applicability of the proposed ABSA-1, is limited to two mathematically equivalent scenarios: (i) we have a single agent communicating to the CC - consider the Fig. \ref{fig: communication topology}-a, with only one agent connected to the CC - or (ii) that the agents communicate with the CC through a relay. In the latter scenario, the relay has full access to the agents' communication observation, i.e., $\bo_i, \forall i\in \mathcal{N}$, while the relay to CC channel is bit-budgeted.  
This limited scenario is useful for us to facilitate our analytical studies on the problem (\ref{eq: TBIC problem - adam's law applied - 4 - body}), allowing us to establish theoretical proof for the losslessness of compression in ABSA-1 as well as its optimal average return performance. These statements will be confirmed by Lemma ~\ref{Theorem: ABSA-1 is Lossless} - the results of which will also be useful to design ABSA-2.
The central idea of ABSA-1 is to represent any two states $\bs^{(i)},\bs^{(j)}$ using the same communication message $\bc$ iff $\pi^*\big(\bs^{(i)}\big)=\pi^*\big(\bs^{(j)}\big)$, where $\pi^*(\cdot): \mathcal{S} \rightarrow \mathcal{M}^N$ is the optimal control policy of the agents, given the access of observations from all agents. Thus, ABSA-1 and ABSA-2 solve the JCCD problem at three different phases: (i) solving the centralized control problem under perfect communications via reinforcement learning i.e., Q-learning, to find $\pi^*(\cdot)$\footnote{ABSA's bottleneck arises from the increasing complexity of Q-learning as agents increase in number $N$. Similar limitations are in place for any other algorithm that requires a centralized training phase \cite{FoersterCounter,FoersterLearning}}, (ii) solving the task-oriented data quantization problem to find $\pi^c$ via a form of data clustering, (iii) finding the $\pi^m$ corresponding to $\pi^c$.

\textcolor{Mycolor3}{ In order to explain ABSA-1, we introduce the problem of task-oriented data compression with centralized control. TBIC is derived using similar techniques in \cite{mostaani2022task} but for a different setting i.e., the communication network of MAS has a star topology. The TBIC problem is no longer a joint control and communication problem but is a quantization design problem in which the features of the control problem are taken into account.}
\textcolor{Mycolor3}{ To arrive to TODC problem from the JCCD problem,} we use the functional $\Pi^{m^*}$ to replace $\pi^m(\cdot)$ with $\Pi^{m^*} \big( \pi^c \big)$ . Upon the availability of $\Pi^{m^*}$, by plugging it into the JCCD problem (\ref{eq: Joint Control and Communication Design (JCCD) problem}), we will have a new problem 
 {\small
\begin{align}\label{eq: TBIC problem}
& \underset{\pi^c}{\text{argmin }} 
& & \mathbb{E}_{p(\rs(t))}                   \Big{|}
    V^{\pi^*}\big(\bs(t) \big) 
-                
    V^{ \Pi^{m^*} \big( \pi^c \big)}\big(\bc(t) \big)  
\Big{|} , ~~ \text{s.t.} ~~ |\mathcal{C}|\leq 2^R,
\end{align}}
where we maximize the system's return with respect to only the communication policies $\pi^c(\cdot)$ of the local relay. The optimal control policy $ \pi^{m^*}(\cdot)$ of the CC is automatically computed by the mapping $\Pi^{m^*}\big( \pi^c(\cdot)\big)$. The problem is called here as the TODC problem.
\textcolor{Mycolor3}{Upon the availability of $\Pi^{m^*}$, the JCCD problem \eqref{eq: Joint Control and Communication Design (JCCD) problem} can be reduced to (\ref{eq: TBIC problem}).  Definition \ref{def: ABSA-1} is provided to formalize a precise approach to solve (\ref{eq: TBIC problem}) via obtaining the communication policy of the relay $\pi^c(\cdot)$ as well as the corresponding $\Pi^{m^*}$, to solve (\ref{eq: Joint Control and Communication Design (JCCD) problem}).}

\begin{definition}\textbf{Quantization and control policies in ABSA-1:}\label{def: ABSA-1}

The communication policy $\pi^{c,ABSA-1}(\cdot)$ designed by ABSA-1 will be obtained by solving the following k-median clustering problem
\begin{align} \label{eq: ABSA-1}
    & \underset{\mathcal{P}}{\text{ min }}
    & {\sum}_{i=1}^{|\mathcal{C}|} {\sum}_{\bs(t) \in \mathcal{P}_i} 
    \Big{|}
    \pi^*\big( \bs(t) \big)  - \mu_i 
    \Big{|},
\end{align}
where $\mathcal{P} = \{\mathcal{P}_1, ..., \mathcal{P}_B \}$ is a partition of $\mathcal{S}$ and $\mu_i$ is the centroid of each cluster $i$. The communication policy of ABSA-1 - $\pi^{c,ABSA-1}(\cdot)$ - is an arbitrary non-injective mapping such that $\forall k \in \{1,...,B\}: \pi^{c,ABSA-1}(\bs) = \bc^{(k)}$ if and only if $\bs \in \mathcal{P}_k$.
 Now let $C_g$ be a function composition operator such that $C_g f = g \circ f$. We define the operator $ \Pi^{m^*} \triangleq C_g$, with $g = \pi^*\big( \pi^{{c,ABSA-1}^{-1}}(\cdot) \big)$\footnote{ Note that as $\pi^{c,ABSA-1}(\cdot)$ is non-injective, its inverse would not produce a unique output given any input. Thus, by $\pi^*\big( \pi^{{c,ABSA-1}^{-1}}(\bc') \big)$ we mean $\pi^*\big( \bs' \big)$, where $\bs'$ can be any arbitrary output of $\pi^{{c,ABSA-1}^{-1}}(\bc')$.} . 
\end{definition}

The optimality of the proposed ABSA-1 algorithm is subsequently provided in Theorem~\ref{Theorem: ABSA-1 is Lossless}.

\vspace{-3mm}
\textcolor{Mycolor3}{
\begin{lemma} \label{Theorem: ABSA-1 is Lossless}
 The communication policy $\pi^{c,ABSA-1}$ - as described by Definition \ref{def: ABSA-1} - will carry out lossless compression of observation data w.r.t. the average return if $|\mathcal{C}| \geq |\mathcal{M}|^N$.
\end{lemma}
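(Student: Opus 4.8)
The plan is to exploit the hypothesis $|\mathcal{C}|\ge|\mathcal{M}|^N$ to drive the $k$-median objective \eqref{eq: ABSA-1} to zero, to deduce from this that $\pi^{c,ABSA-1}$ merges two states only when they share the same optimal joint action, and finally to show that the induced pair $\langle\pi^m,\pi^{c,ABSA-1}\rangle$ with $\pi^m=\Pi^{m^*}(\pi^{c,ABSA-1})$ replays the state--action trajectory of $\pi^*(\cdot)$ sample path by sample path, so that the return \eqref{eq: cumulative rewards} equals the communication-unconstrained optimum. First I would observe that the image of $\pi^*(\cdot):\mathcal{S}\to\mathcal{M}^N$ has at most $|\mathcal{M}|^N\le|\mathcal{C}|$ elements; assigning each distinct optimal action its own cluster (and leaving the surplus clusters empty) with $\mu_i$ equal to that action makes every summand in \eqref{eq: ABSA-1} vanish, so the minimum of \eqref{eq: ABSA-1} is $0$. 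Conversely, if some cluster $\mathcal{P}_i$ contained states $\bs^{(i)},\bs^{(j)}$ with $\pi^*(\bs^{(i)})\ne\pi^*(\bs^{(j)})$, no centroid $\mu_i$ could equal both values and that cluster would contribute strictly positive cost; hence the (arbitrary) minimizer fixed in Definition \ref{def: ABSA-1} has every $\mathcal{P}_k$ contained in a single level set of $\pi^*(\cdot)$.

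Given this, $\pi^*(\cdot)$ is constant on each preimage $\{\bs:\pi^{c,ABSA-1}(\bs)=\bc^{(k)}\}$, so the map $g=\pi^*\!\big(\pi^{{c,ABSA-1}^{-1}}(\cdot)\big)$ of Definition \ref{def: ABSA-1} is single-valued, and $\pi^m:=\Pi^{m^*}(\pi^{c,ABSA-1})$ — applied to the most recent received message, the earlier $d-1$ messages being ignored — obeys $\pi^m\big(\bc(t)\big)=\pi^*\big(\bs(t)\big)$ whenever $\bc(t)=\pi^{c,ABSA-1}(\bs(t))$. In the single-agent/relay scenario of Section \ref{subsect: ABSA-1} the transmitting node sees the whole state $\bs(t)$, so the message actually emitted at step $t$ is exactly $\pi^{c,ABSA-1}(\bs(t))$. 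I would then couple the closed loop run under $\pi=\langle\pi^m,\pi^{c,ABSA-1}\rangle$ with the closed loop run under $\pi^*(\cdot)$, started from the same $\bs(t')$ and driven by the same realization of the kernel $T(\cdot)$: the two actions agree at $t'$, the two next states agree, and induction over $t=t',\dots,T'$ gives identical trajectories, hence equal returns and $\mathbb{E}_{\pi}\{\bg(t')\}=\mathbb{E}_{\pi^*}\{\bg(t')\}$, the maximum value of \eqref{eq: Joint Control and Communication Design (JCCD) problem} without any rate constraint. This is exactly losslessness with respect to the average return, and by the equivalence established in Lemma \ref{lemma: quantization measure} it forces the distortion $\mathbb{E}_{p(\rs(t))}\big|V^{\pi^*}(\bs(t))-V^{\pi^m}(\bc(t))\big|$ in \eqref{eq: TBIC problem - adam's law applied - 4 - body} down to its minimum.

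I expect the delicate step to be the coupling in the last part, because the controller acts on the message history in $\mathcal{C}^{Nd}$, not on states: one has to verify that, along every realizable trajectory, the sequence of quantized messages is a sufficient statistic for reproducing $\pi^*$, which holds here only because the clustering step makes the clusters $\pi^*$-monochromatic \emph{and} because, in the ABSA-1 regime, a single node observes the entire state prior to quantization. It is also worth flagging explicitly why the hypothesis $|\mathcal{C}|\ge|\mathcal{M}|^N$ cannot be relaxed within this argument: it is precisely the condition that makes the zero-cost, level-set-respecting clustering feasible; with a strictly smaller codebook at least two optimal-action classes must be identified, $g$ becomes genuinely set-valued, and the attained return can fall below the unconstrained optimum — the regime that motivates ABSA-2.
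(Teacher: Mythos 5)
Your proof is correct, but it reaches the conclusion by a different route than the paper. The paper's proof stays inside the distortion formulation of Lemma \ref{lemma: quantization measure}: it rewrites the objective of \eqref{eq: TBIC problem - adam's law applied - 4 - body} in terms of $Q$-functions via the Bellman optimality equation, substitutes $\pi^{c,ABSA-1}$ and $\Pi^{m^*}$ from Definition \ref{def: ABSA-1}, invokes the monochromaticity of the clusters to identify ${\pi}^*(\bs')$ with $\pi^*(\bs(t))$, and concludes that each summand vanishes. You instead argue at the level of trajectories: you first establish that the minimum of \eqref{eq: ABSA-1} is zero under $|\mathcal{C}|\ge|\mathcal{M}|^N$ and that \emph{any} minimizer must therefore place each $\mathcal{P}_k$ inside a single level set of $\pi^*(\cdot)$, then couple the closed loop under $\langle\pi^m,\pi^{c,ABSA-1}\rangle$ with the closed loop under $\pi^*(\cdot)$ and obtain equality of returns pathwise, recovering the zero-distortion statement as a corollary. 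Two things your version buys: (i) the paper merely asserts ``according to the definition of $\mathcal{P}_{k'}$, we know $\pi^*(\bs(t))=\pi^*(\bs')$'' — but Definition \ref{def: ABSA-1} only says $\mathcal{P}$ minimizes \eqref{eq: ABSA-1}, so your explicit argument (zero cost is achievable, and a cluster mixing two distinct optimal actions contributes strictly positive cost) is the missing justification of that step; (ii) the paper's last substitution silently replaces $Q^{\pi^*}\big(\bc(t),\cdot\big)$ by $Q^{\pi^*}\big(\bs(t),\cdot\big)$, and your trajectory-coupling argument is precisely what makes that identification legitimate — the value of the aggregated state equals the value of the underlying state because the induced action sequence is identical. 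Your closing remarks on why the hypothesis $|\mathcal{C}|\ge|\mathcal{M}|^N$ is tight and why the argument is confined to the single-agent/relay regime are consistent with the paper's discussion and are a useful addition.
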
}

\begin{proof}
Appendix \ref{append: proof: Theorem: ABSA-1 is Lossless}.
\end{proof}
\textbf{Remark:} ABSA-1 will also carry out lossless compression of observation data with respect to the distortion measure introduced in problem (\ref{eq: TBIC problem - adam's law applied - 4 - body}). Given the proofs of lemma 2 and lemma 1, the proof of this remark is straightforward and is therefore, omitted.  

\textcolor{Mycolor3}{The losslessness of quantization in ABSA-1 implies that the $\pi^{ABSA-1}$ will result in no loss of the system's average return, compared with the case where the optimal policy $\pi^*(\cdot)$ is used to control the MAS under perfect communications. Consequently, the control policy $\pi^{m, ABSA-1}(\cdot)$ is optimal. Let us recall once again that here, we do not use a conventional quantization distortion metric, we select a representation of local observation in such a way that the conveyed message maximizes the average task return.}

\textcolor{Mycolor3}{Note that in [7], the authors do not find the higher order function $\Pi^{m^*}$ that reduces the joint communications and control problem to a task-oriented communication design - instead they solve an approximated version of the task-oriented communication design problem. In this paper, however, we introduce a closed form $\Pi^{m^*}$ by ABSA-1 that can map every communication policy $\pi^{c,ABSA-1}$ introduced by ABSA-1, to the exact optimal control policy. This implies that the solutions provided by ABSA-1 are also the optimal solutions of the joint communication and control design (JCCD) problem.}


\vspace{-4mm}
\subsection{ABSA-2 Algorithm}
\vspace{-2mm}
We saw earlier in lemma \ref{Theorem: ABSA-1 is Lossless} that the communication policy obtained by solving the problem \ref{eq: ABSA-1} is optimal and can result in a lossless average return performance when $|\mathcal{C}| \geq |\mathcal{M}|^N$. To solve the problem \ref{eq: ABSA-1}, however, we need to know $\pi^*\big(\bs(t)\big)$. This is a limiting assumption that in ABSA-1 can be translated to two different system models which are less general than the system pictured in Fig. \ref{fig: System model}: (i) presence of an extra relay between the agents and the central controller where the relay has perfect downlink channels to agents and a single bit-budgeted channel to the CC. (ii) The MAS is only composed of one single agent and a CC where the uplink of the agent is bit-budgeted but its downlink is a perfect channel.

Our second proposed algorithm ABSA-2 removes the need to know $\pi^*\big(\bs(t)\big)$ and can run under the more general settings shown in Fig. \ref{fig: System model}. This is done by approximating the local element $\bm^*_i(t)$ of $\pi^*\big(\bs(t)\big) = \langle \bm_1*(t), ..., \bm_N*(t) \rangle$ at agent agent $i$ given the local observation of this agent $\bo_i(t)$. That is, given a centralized training phase, we will have access to the empirical joint distribution of $p(\ro_i,\rm^*_i)$, using which we can obtain a numerical MAP estimator of $\hat{\rm^*}_i$. Thus ABSA-2 allows for fully distributed communication policies. \textcolor{Mycolor3}{In particular, the encoding of the communication messages of each agent is carried out separately by them before they communicate with CC or any other agent. This form of encoding is often referred to as distributed encoding. Furthermore, the encoding carried out by ABSA-2 at each agent is a low-complexity and low-power process that requires no inter-agent communications before hands. In this case, each agent directly communicates its encoded observations to the CC via a bit-budgeted communication channel.} \textcolor{Mycolor3}{ In order to improve the learning efficiency at CC, it can take into account all the communications received in the time frame $[t-d, t ]$  to make a control decision $m(t)$. Therefore, the ABSA-2 algorithm can strike a trade-off between the complexity of the computations carried out at the CC - directly impacted by the value of $d$ - and effectiveness of agents' communications - inversely impacted by the value of $|\mathcal{C}|$.} 
Moreover, ABSA-2 is straightforwardly extendable to the different values of $|\mathcal{C}|$ per each agent $i$, instead of having only one fixed bit-budget $R = \log_2 |\mathcal{C}|$ for all agents.

 \begin{figure*}[t] 
  \centering 
      \includegraphics[width=0.80\textwidth]{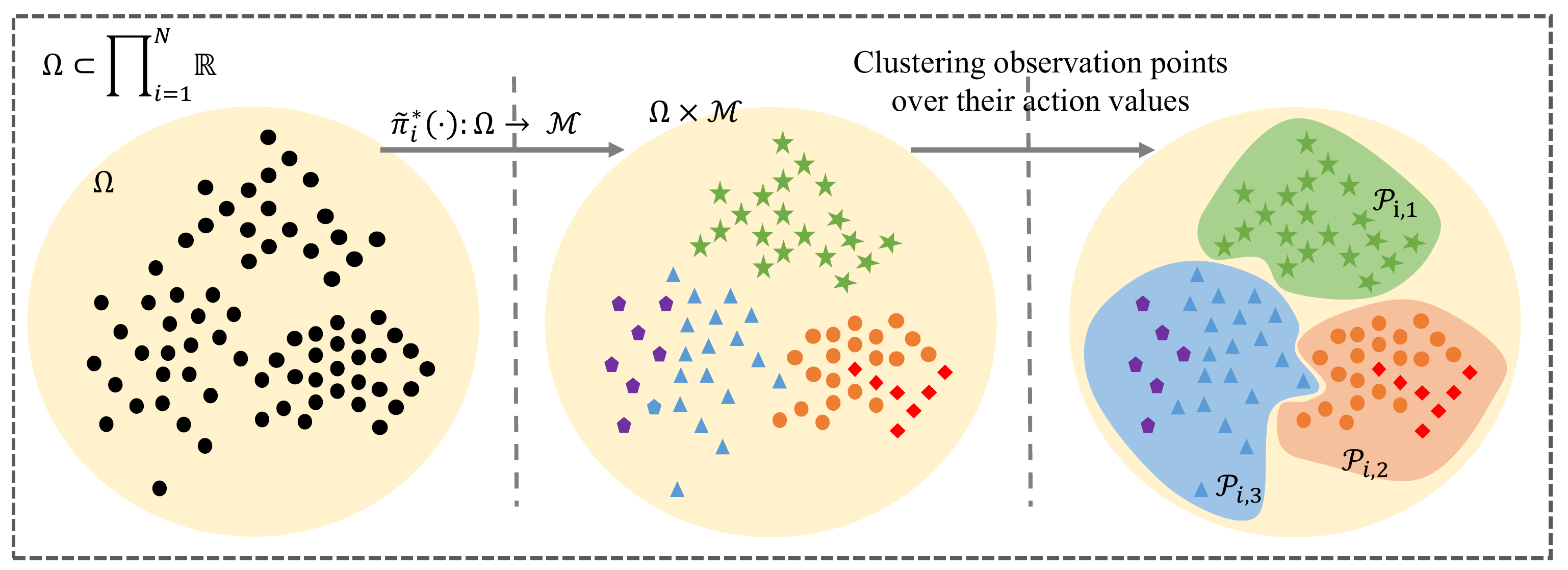} 
      \vspace{-1mm}
  \caption{Abstract representation of states in ABSA-2 with $ |\mathcal{C}| =3$ and $|\mathcal{M}| = 5$ - $|\mathcal{M}|$ is represented by the number of shapes selected to show the observation points and $|\mathcal{C}|$ is represented by the number of clusters shown in the right subplot. The left subplot shows the observation points prior to aggregation. During a centralized training phase we first compute $\pi^*(\cdot)$ according to which $ {\pi}^*_i(\cdot) : \Omega \rightarrow \mathcal{M}$ can be obtained. We use the surjection $ {\pi}^*_i(\cdot)$ to map a high dimensional/precision observation space to a low dimensional/precision space. The middle subplot shows the observation points together with the action values assigned to them - each unique shape represents a unique action value. \textbf{This new representation of the observation points, embeds the features of the control problem into the data quantization problem.} Finally, we carry out the clustering of observation points according to their action values - all observation points assigned to (a set of) action values are clustered together. The right subplot shows the aggregated observation space, where all the observation points in each cluster will be represented using the same communication message. The centralized controller which is run using DQN, observes the environment at each time step, through all these aggregated observations/communications it receives from all the agents.}
  \label{fig: Abstract representation by ABSA-2}
  \vspace{-0.05cm}
\end{figure*}


\begin{algorithm}\label{ABSA2 - Algorithm}
\caption{ Action Based State Aggregation (ABSA-2)}
\begin{algorithmic}[1]

 \State \textbf{Initialize} replay memory $D$ to capacity 10'000.
  \State \textbf{Initialize} state-action value function $Q(\cdot)$ with random weights $\theta$.
  \State \textbf{Initialize} target state-action value function $Q^{t}(\cdot)$ with weights $\theta^t = \theta $.

 \State Obtain $\pi^{*}(\cdot) \text{ and } Q^{*}(\cdot)$ by solving (2) using Q-learning \cite{Suttonintroduction}*, where $R >> H(\bo_i(t)) \,\,\, \forall i \in \mathcal{N}$. 
 \State Compute $ {\pi}^*_i (\ro_i(t)) = \text{Mode} \big[  \bm^*_i | \ro_i(t) \big]$, for $\forall \ro_i(t) \in \Omega$, for $i \in \mathcal{N}$.
 \State Solve problem (5) by applying k-median clustering to obtain $\mathcal{P}_i$ and $\pi^c_i(\cdot)$  , for $i \in \mathcal{N}$.
  \For{each episode $k=1:$ 200'000} 
 \State Randomly initialize observation $\ro_i(t=0)$, for $i \in \mathcal{N}$
  \State Randomly initialize the message $ {\rc}(t=0)$
\For{$t = 1: T'$}

\vspace{1mm}          
            \State Select $\rc_i(t)$, at agent $i$, following $\pi^{c}_i(\cdot)$, for $i \in \mathcal{N}$

            \State Obtain the message $\langle  {\rc}_1(t),...,  {\rc}_N(t) \rangle$ at the CC
            
            \State Follow $\epsilon$-greedy, at CC, to generate the action $\rm_i(t)$, for $i \in \mathcal{N}$  
            \State Obtain reward $r(t) = R\big( \rs(t),\rm(t) \big)$ at the CC
            \State Store the transition {\small $\Big{\{}  {\rc}(t), \rm(t), r(t),  {\rc}(t+1)  \Big{\}}$ in $D$}
\State $t \leftarrow t+1$

\EndFor\label{euclidendwhile-2}
\State \textbf{end}

\State Sample  \small {$D' = \Big{\{}  {\rc}(t'), \rm(t'), r(t'),  {\rc}(t'+1)  \Big{\}}_{t'=t'_1}^{t'=t'_{62}} $} from $D$ 

\For{each transition $t'=t'_1:t'_{62}$ of the mini-batch $D'$} 
\State {\small Compute DQN's average loss  $L_{t'}(\theta) = \frac{1}{2} \Big(r(t') \! + \! \underset{\rm^*}{\text{ max }} Q^t \big(  {\rc}(t'+1) , \rm^*, \theta^t \big) \! - \!
\underset{\rm^*}{\text{ max }} Q \big(  {\rc}(t') , \rm^*, \theta \big)  \Big)^2$}, 
\State {\small Perform a gradient descent step on $L_{t'}(\theta)$ w.r.t $\theta$}
\EndFor
\State \textbf{end}
\State Update the target network $Q^t(\cdot)$ every $1000$ steps
\EndFor
\State \textbf{end}
\vspace{1mm}

%
\vspace{-1mm}
\end{algorithmic}
\end{algorithm}


As illustrated in Fig. \ref{fig: Abstract representation by ABSA-2}, ABSA-2, each agent $i$ obtains a communication policy function $\pi^c_i(\cdot)$ by solving  a clustering problem  over its local observation space instead of the global state space, formulated as follows:
\begin{align} \label{eq: ABSA-2 - comms}
    & \underset{\mathcal{P}_i}{min}
    & {\sum}_{j=1}^{|\mathcal{C}|} {\sum}_{\ro_i(t) \in \mathcal{P}_{i,j}} 
    \Big{|}
     \tilde{\pi}^*_i (\ro_i(t)) - \mu_{i,j}  
    \Big{|}
    ,
\end{align}
where $\mathcal{P}_i = \{\mathcal{P}_{i,1}, ..., \mathcal{P}_{i,|\mathcal{C}|} \}$ is a partition of $\Omega$, and
\begin{align}
    \tilde{\pi}^*_i (\ro_i(t)) = {\text{argmax}}_{\bm_i^*} \,\, p_{\pi^*}(\bm_i^* | \ro_i(t)) ,
\end{align}
and $\bm^*_i $ is the optimal action of agent $i$, which is $i$-th element of $\bm^* \triangleq \pi^*\big( \bo_1(t), ..., \bo_N(t) \big)$. Thus $ \tilde{\pi}^*_i (\ro_i(t))$ is the maximum aposteriori estimator of $\bm_i^* = \pi^*\big( \bs(t) \big)$ given the local observation $\ro_i(t)$. 

Once the clustering in \eqref{eq: ABSA-2 - comms} is done, each agent $i$ will train its local communication policy $\pi^{c,ABSA-2}_i(\cdot)$, which is any non-injective mapping such that $\forall k \in \{1,...,|\mathcal{C}|\}: \pi^{c,ABSA-2}_i(\bo_i) = \bc^{(k)}$ iff $\bo_i \in \mathcal{P}_{i,k}$.
\textcolor{Mycolor3}{ After obtaining the communication policies $ \langle\pi^{c,ABSA-2}_i(\cdot) \rangle_{i=1}^N$, to obtain a proper control $\pi^m(\cdot)$ policy at the CC corresponding to the communication policies, we perform a single-agent reinforcement learning. To this end and to manage the complexity of the algorithm for larger values of $d$, we propose to use DQN architecture \cite{mnih2015human} at the CC.}

\vspace{-3mm}
\section{Performance Evaluation} \label{Numerical results - Section}
\vspace{-2mm}
In this section, we evaluate our proposed schemes via numerical results for the popular multi-agent geometric consensus problem\footnote{In our numerical experiments, the discount factor is assumed to be $\gamma = 0.9$. All experiments are done over a grid world of size $8 \times 8$, where the goal point of the rendezvous is located at the grid number $\Omega^T = \{ 22 \}$.}. Through indirect design, ABSA-1 and ABSA-2 never rely on explicit domain knowledge about any specific task, such as geometric consensus. Thus, we conjecture that their indirect design allows them to be applied beyond geometric consensus problems and to a much wider range of tasks. To make the geometric consensus task suitable for the evaluation of our proposed algorithms, similar to \cite{mostaani2022task}, we have introduced a bit constraint to the communication channel between the agents and the CC. After evaluating the proposed algorithms in the context of the rendezvous problem, we attempt to explain the behaviour of all the algorithms via the existing metric - positive listening - for measuring the task-effectiveness of communications. As positive listening falls short in explaining all the aspects of the behaviour of the investigated algorithms, we will also introduce a new metric. Called \textit{task relative information}, the new metric assists to further explain the behaviour of different algorithms with a higher accuracy and reliability.

\subsection{The geometric consensus problem}
\textcolor{Mycolor3}{ Our proposed schemes are evaluated in this section through numerical results for the rendezvous problem \cite{zilber2001communication,amato2009incremental}, which is a specific type of geometric consensus problems under finite observability \cite{barel2017come}. Following the instantaneous and synchronous communication model and the star network topology explained in section \ref{subsect: system model} and Fig. \ref{fig: communication topology} respectively, the rendezvous problem is explained as the following. At each time step $t$ several events happen in the following order. First, an agent $i$ obtains a local observation $\ro_i(t)$ - which is equivalent to its own location in the grid-world. The agent $i$, subsequently, follows its quantization/communication policy to generate a compressed version $\rc_i(t)$ of its observation to be communicated to the CC via bit-budgeted communication links. After receiving the quantized observations of all agents, CC follows its control policy to decide and select the joint action vector $\rm(t)$ and communicate each agent $i$'s local action $\rm_i(t)$ to it accordingly. The local action $\rm_i(t) \in \mathcal{M}$ that is communicated back to the agent $i$ via a perfect communication channel is a one directional move in the greed world, i.e, $\mathcal{M} = \{ \text{ left, right, up, down, pause} \}$. Given each agent $i$'s action $\rm_i(t)$ the environment evolves and transitions to the next time step $t+1$ where each agent $i$ obtains a new local observation $\ro_i(t+1)$. All agents receive a single team reward \begin{equation}\label{example environment noise distribution}
   \mathrm{r}_t =
   \begin{cases}
    C_1, & \text{if  $\exists \,  i,j \in N :  {\ro}_{i}(t) \in \Omega^T \And  {\ro}_{j}(t) \notin \Omega^T$} \\
    C_2, & \text{if  \textcolor{Mycolor3}{$\nexists \,  i \in N :  {\ro}_{i}(t) \in \Omega - \Omega^T$}},\\
    0, & \text{otherwise},\\
   \end{cases}
\end{equation}
where $C_1 < C_2$ and $\Omega^T$ is the set of terminal observations i.e., the episode terminates if $\exists \, i \in \mathcal{N}:  \ro_i(t) \in \Omega^T$. Accordingly, when not all agents arrive at the target point, a smaller reward $C_1 = 1$ is obtained, while the larger reward $C_2 = 10$ is attained when all agents visit the goal point at the same time.} We compare our proposed ABSA algorithms with the heuristic non-communicative (HNC), heuristic optimal communication (HOC) and SAIC algorithms proposed in \cite{mostaani2022task} which are direct schemes to jointly design the communication and control policies for the specific geometric consensus problem solved here. In contrast to ABSA-1 and ABSA-2 which enjoy an indirect design, the direct design of HOC and HNC does not allow them to be applied in any other problem rather than the specific geometric consensus problem with the finite observability i.e., the rendezvous problem explained here. 


\subsection{Numerical experiment}
 A constant learning rate $\alpha=0.07$ is applied when exact Q-learning is used to obtain $\pi^*(\cdot)$ and $\alpha=0.0007$ when DQN is used to learn $\pi^m(\cdot)$ for ABSA-2. For the exact Q-learning, a UCB\footnote{UCB is a standard scheme used in exact reinforcement learning to strike a trade-off between the exploration and exploitation \cite{Suttonintroduction}.} exploration rate of $c=1.25$ considered. The deep neural network that approximates the Q-values is considered to be a fully connected feed-forward network with 10 layers of depth, which is optimized using the Adam optimizer. An experience reply buffer of size 10'000 is used with the mini-batch size of 62. The target Q-network is updated every 1000 steps and for the exploration, decaying $\epsilon$-greedy with the initial $\epsilon = 0.05$ and final $\epsilon=0.005$ is used \cite{mnih2015human}. In any figure that the performance of each scheme is reported in terms of the averaged discounted cumulative rewards, the attained rewards throughout training iterations are smoothed using a moving average filter of memory equal to 20,000 iterations. As explained in section \ref{subsect: ABSA-1}, ABSA-1 and ABSA-2 both require a centralized training phase prior to be capable of being executed in a distributed fashion.


For all black curves, one prior centralized training phase to obtain $\pi^*(\cdot)$ is required. As detailed in Section III, the proposed algorithms, ABSA-1 and ABSA-2, leverage $\pi^*(\cdot)$ to design $\pi^c$ and then $\pi^m$ afterwards. Dashed curves, HOC and HNC, as proposed by \cite{mostaani2022task} provide heuristic schemes which exploit the domain knowledge of its designer about the rendezvous task making it not applicable to any other task rather than the rendezvous problem. While HOC enjoys a joint control and communication design, HNC runs with no communication. Note that HNC \& HOC require communication/coordination between agents prior to the starting point of the task - which is not required for any other scheme. These schemes, introduced by \cite{mostaani2022task}, are detailed as the following.

\begin{itemize}
    \item  A joint communication and control policy is designed \textbf {using domain knowledge} in the rendezvous problem. HNC agents approach the goal point and wait nearby for a sufficient number of time steps to ensure that the other agent has also arrived. Only after that, they will get to the goal point. Note that this scheme requires communication/coordination between agents prior to the starting point of the task, since they have to have had agreed upon this scheme of coordination.
     \item A joint communication and control policy is designed \textbf {using domain knowledge} in the rendezvous problem. HOC agents wait next to the goal point until the other agent informs them that they have also arrived there. Only after that, they will get to the goal point. Note that this scheme requires communication/coordination between agents prior to the starting point of the task, since they have to have had agreed upon this scheme of coordination and communications as well as on the the meaning that each communication message entails.
     \end{itemize}

\textcolor{Mycolor3}{To obtain the results demonstrated in Fig. \ref{fig: numerical result - return VS episode}, we have simulated the rendezvous problem for a three-agent system. The black curves illustrate the training phase that is occurring at CC to obtain $\pi^m$ after $\pi^c$ is already computed using equations (\ref{eq: ABSA-1}) and (\ref{eq: ABSA-2 - comms}).} We observe the lossless performance of ABSA-1 in achieving the optimal average return without requiring any \textcolor{Mycolor3}{(2nd round)} training. To enable fully decentralized quantization of the observation process, ABSA-2 was proposed which is seen to approach the optimal solution as $d$ grows. All ABSA-2 curves are plotted with $|\mathcal{C}| = 3$, and ABSA-1 curve is plotted with $|\mathcal{C}| = |\mathcal{M}|^N = 125$ in 3 agent scenarios - Fig. \ref{fig: numerical result - return VS episode} - and $ |\mathcal{C}| = |\mathcal{M}|^N = 25$ in the two agent scenario - Fig. \ref{fig: numerical result - Normalized return VS symnom}.

 \begin{figure}[t] 
  \centering 
      \includegraphics[width=0.98\textwidth]{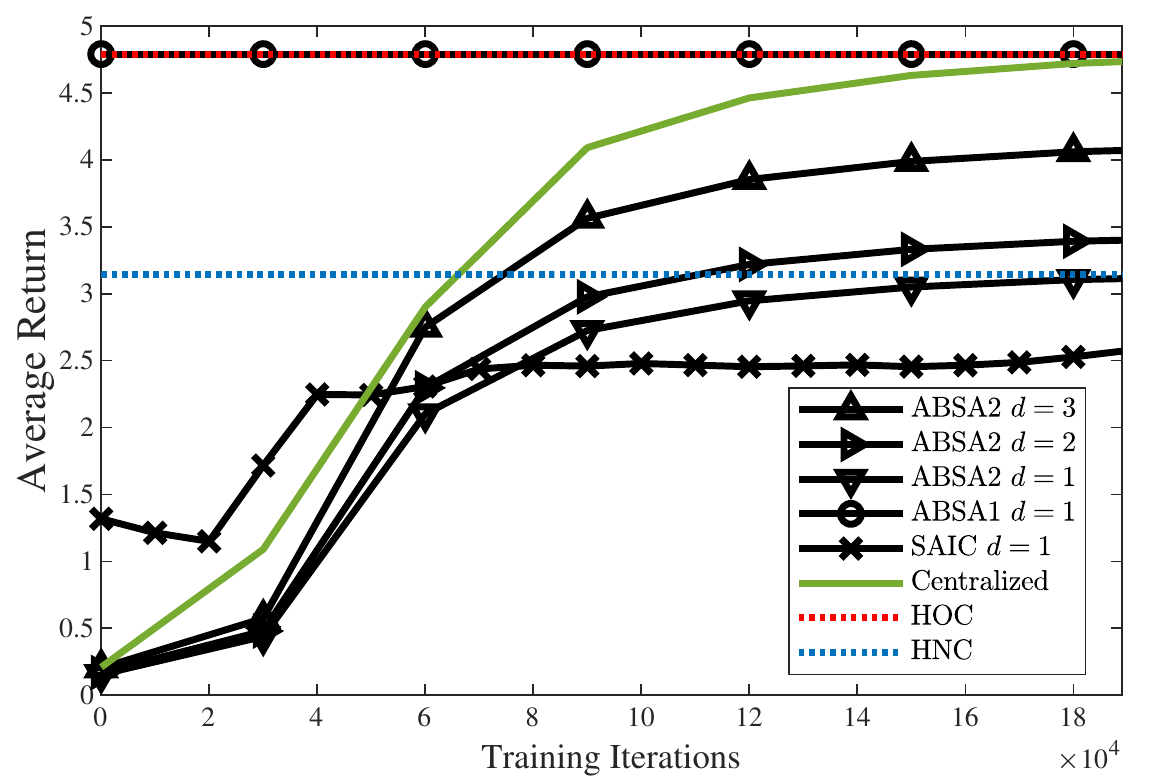} 
      \vspace{-1mm}
  \caption{ Average return comparison made between the proposed schemes and some benchmarks introduced in \cite{mostaani2022task} - the three agent scenario under constant bit-budget values.}
  \label{fig: numerical result - return VS episode}
  \vspace{-0.1cm}
\end{figure}

In Fig. \ref{fig: numerical result - return VS episode}, we see how the performance of ABSA-2 compares with HNC, HOC and SAIC at different rates of quantization. As expected, with the increase in the size of the quantization codebook, the average return performance of ABSA-2 is gradually improved, such that it approaches near-optimal performance at $d=3$. We also observe the superior performance of ABSA-2 compared with SAIC at very tight bit-budgets where SAIC's performance sees a drastic drop. It is observed that as $d$ grows, ABSA-2 approaches the optimal return performance even under higher rates of quantization, however, higher values of $d$ come at the cost of the increased computational complexity of ABSA-2.

 \begin{figure}[t] 
  \centering 
      \includegraphics[width=0.95\textwidth]{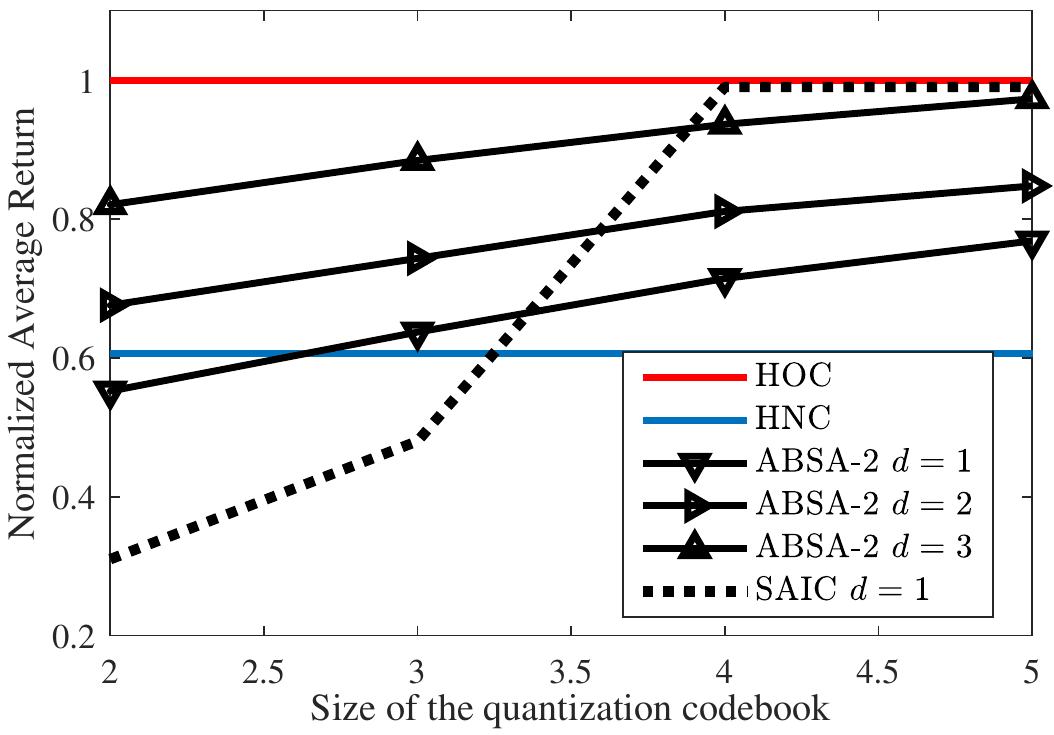} 
      \vspace{-1mm}
  \caption{ The obtained normalized average return as a function of codebook size $|\mathcal{C}|$ is compared across a range of schemes: proposed schemes and some benchmarks introduced in \cite{mostaani2022task} - two-agent scenario.}
  \label{fig: numerical result - Normalized return VS symnom}
  \vspace{-0.1cm}
\end{figure}


 \subsection{Explainablity of the learned communication policies}

One common metric to evaluate the effectiveness of communications in the literature \cite{lowe2019pitfalls} is \emph{positive listening} $I\big( \bc_i(t); \bm_j(t) \big) \,\, j \in \mathcal{N} - \{i\}$, which is the mutual information between the communication $\bc_i(t)$ produced by an agent $i$ and the action $\bm_j(t)$ selected by another agent following the receipt of the communication $\bc_i(t)$ from agent $i$. Positive signaling $I\big( \bo_i(t); \bc_i(t) \big)$ is another metric proposed by \cite{lowe2019pitfalls}, measuring the mutual information between agent $i$'s observation $\bo_i(t)$ and its own produced communication message $\bc_i(t)$ at the same time step. 
As to be shown below, however, these metrics are unable to fully capture the underlying performance trends of all schemes. Therefore, we, for the first time, introduce  a new metric called \emph{task relevant information} (RI) - allowing us to explain the task-effectiveness of the learned communication policies. 
 
 Measuring positive listening is one way to quantify the contribution of the communicated messages of agent $i$ to the action selection of agent $j$. Positive signalling, on the other hand, measures the consistency as well as the relevance of the communicated messages $\bc_i(t)$ and the agent's observations $\bo_i(t)$.
 As SAIC and ABSA use a deterministic mapping of observation $\ro_i$ to produce the communication message $\rc_i$, they are always guaranteed to have positive signalling \cite{lowe2019pitfalls} - the degree of which,  however, is limited by the uplink channel's bit budget $R = \log_2|\mathcal{C}|$. Thus, among the existing metrics for the measurement of the effectiveness of communications, we limit our numerical studies to the measurement of positive listening. It is known that the higher positive listening is, the stronger (not necessarily better) we expect the coordination between the agents to be. That is, the higher positive listening means higher degree of dependence between agents (their actions and observations) which is not necessarily sufficient for the team agents to fulfill the task.
 
 
 \begin{figure}[t] 
  \centering 
      \includegraphics[width=0.99\textwidth]{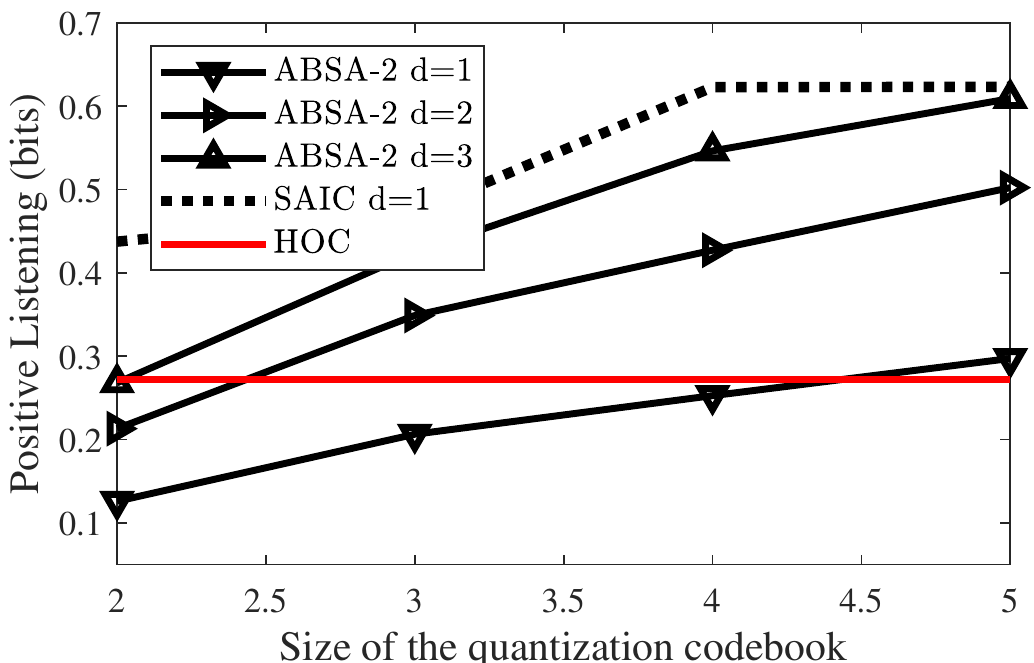} 
      \vspace{-1mm}
  \caption{Comparing the positive listening $I\big( \bc_i(t); \bm_j(t) \big)$ performance across a range of schemes.}
  \label{fig: numerical result - PL VS symnom}
  \vspace{-0.1cm}
\end{figure}

Figure \ref{fig: numerical result - PL VS symnom} explains how stronger coordination between agents and the CC is often resulting in an increased performance of the MAS in obtaining a higher average return. For instance, the enhancement in the positive-listening performance of SAIC from $|\mathcal{C}|=3$ to $|\mathcal{C}|=4$ quantizer in Fig. \ref{fig: numerical result - PL VS symnom} is resulting in an improved average return performance, as shown in Fig. \ref{fig: numerical result - Normalized return VS symnom}. This metric also reasonably explains the enhancement of ABSA-2 performance in obtaining higher return by increasing $d$ - the memory of the CC - and the size of the quantization codebook $|\mathcal{C}|$. Moreover, stronger coordination between agents and CC is visible in ABSA-2 when compared with HOC. Thus, we expect better average return performance for ABSA-2 which is in contrast to the results of Fig. \ref{fig: numerical result - return VS episode}. This event suggests that stronger coordination - measured by positive listening - may not necessarily result in an improved average return performance as the coordination may not be perfectly aligned with task needs. 

 The curve concerning the HOC scheme allows us to recall that a positive listening of 0.3 (bit) is sufficient to maintain the coordination required for optimal performance in the aforementioned geometric consensus task. Therefore, in the ABSA-2 and SAIC schemes, there is still an unnecessary influence from the side of the communication messages to the actions selected by the receiving end. In fact, not all the information received from the receiving end has contributed to the higher average return of the system. Accordingly, there is yet, some unnecessary data in the communication messages designed by ABSA that contain no task-specific/useful information. 
 
 Thus we believe that positive listening cannot explicitly quantify the effectiveness of the task-oriented communication algorithms; therefore they fall short in explaining the behaviour of these algorithms. Even when positive listening is computed as $I \left( \bc_i(t) ; \bm(t) \right)$ to capture the mutual information between the communication of agent $i$ and the control signals of all agents we arrive at almost similar patterns - Fig. \ref{fig: numerical result - PL vector VS symnom}. 
 
  \begin{figure}[t] 
  \centering 
      \includegraphics[width=0.99\textwidth]{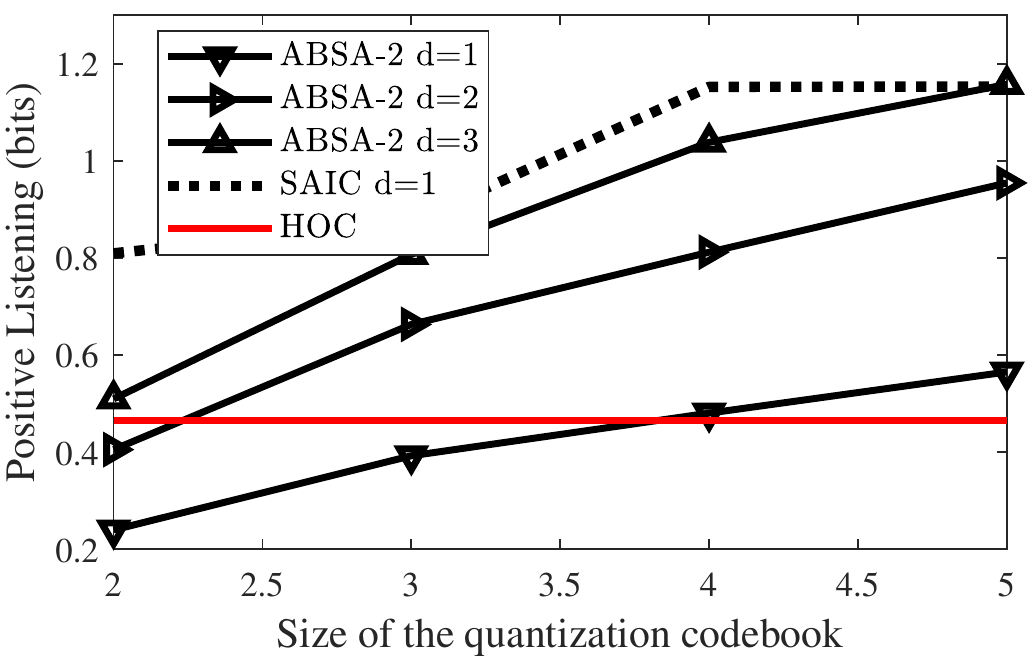} 
      \vspace{-1mm}
  \caption{Comparing the positive listening $I \left( \bc_i(t) ; \bm(t) \right)$ performance across a range of schemes.}
  \label{fig: numerical result - PL vector VS symnom}
  \vspace{-0.1cm}
\end{figure}

 Figure \ref{fig: numerical result - relevant information VS symnom}, investigates the performance of multiple schemes via a novel performance metric: task relevant information (TRI). Here we define the task relevant information metric to be
 \begin{align}
     I\Big( \pi^c\big(\bo_i(t)\big);  \pi^*\big( \bs(t) \big)  \Big) = I\big(\bc_i(t); \bm^*(t)\big) ,
 \end{align}
which measures the mutual information (in bits) between the communicated message of agent $i$ and the vector $\bm^*(t)$ of joint optimal actions at the CC - which is selected by the optimal centralized control policy $\pi^*(\cdot)$. As demonstrated by Fig. \ref{fig: numerical result - relevant information VS symnom}, TRI is an indirect metric of the effectiveness of communications that can explain the behaviour of different communication designs. It is also observed that the TRI metric magnifies the performance gap between different schemes as they get closer to the optimal performance. Nevertheless, TRI can be utilized as a standalone measure to quantify the effectiveness of a communication design since it almost perfectly predicts the average return performance of the a communication policy - without the need for the communication to be tested when solving the real task.

Note that, we measure the task-effectiveness of a quantization algorithm based on the average return that can be obtained when using it. Further, to measure the average return that can be obtained under the communication policies $\langle \pi^c_1(\cdot),..., \pi^c_N(\cdot) \rangle$, we have to design the control policy $\pi^m(\cdot)$ at the CC that selects the control vector $\bm(t)$ having access to only the quantized observations of the agents $\bc(t)$. Accordingly, we cannot measure the effectiveness of the communication policy of an MAS without having a specific design for their control policy. Even after the design of the control policy of the MAS, it is challenging to understand if the suboptimal performance of the algorithm is caused by an ineffective design of the control policy or the communication policy. In fact, it is hard disentangle the effect of the control and communication policies on the MAS's average return. Our proposed metric TRI can facilitate measuring the performance of any communication policy in isolation and without the effect of the control policy being present in the numerical values of TRI.

 Accordingly, the importance of introducing this metric is multi-fold: (i) by using TRI as an indirect metric we can measure the effectiveness of a communication policy for any specific task; (ii) it allows us to measure the effectiveness of the communication scheme prior to the design of any control policy; (iii) it helps to design task effective communication policies in complete separation from the control policy design.

  \begin{figure}[t] 
  \centering 
      \includegraphics[width=0.98\textwidth]{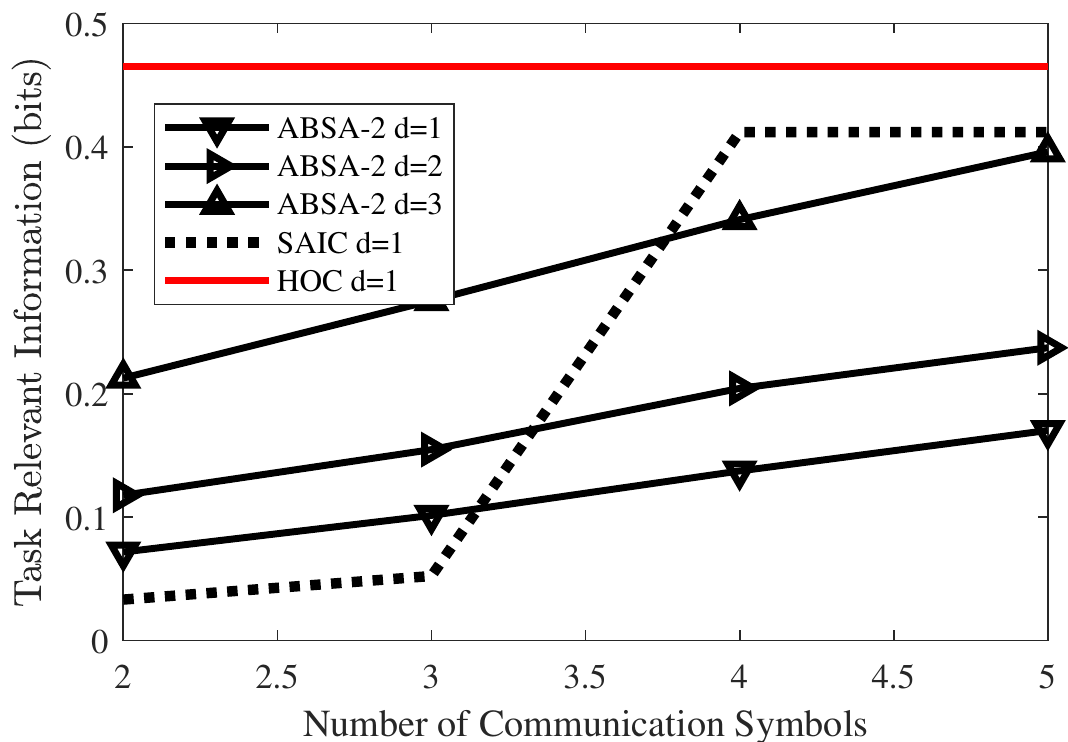} 
      \vspace{-1mm}
  \caption{Comparing the task relevant information (TRI) performance across a range of schemes. It is observed that TRI can comprehensively explain the behaviour of all task-effective quantization schemes in a certain task without the need to measure their effectiveness via their resulting average return in the task - compare this figure with Fig. \ref{fig: numerical result - Normalized return VS symnom} .}
  \label{fig: numerical result - relevant information VS symnom}
  \vspace{-0.1cm}
\end{figure}

\vspace{-4mm}
\section{Conclusion}
\vspace{-3mm}
In this paper, we have investigated the joint design of control and communications in an MAS under centralized control and distributed communication policies. We first proposed an action-based state aggregation algorithm (ABSA-1) for lossless compression and provided analytical proof of its optimality. Then we proposed ABSA-2, which offers a fully distributed communication policy and can trade computational complexity for communication efficiency. We finally demonstrated the task-effectiveness of the proposed algorithms via numerical experiments performed on a geometric consensus problem via a number of representative metrics. \textcolor{Mycolor3}{ 
 Furthermore, our numerical studies demonstrate the pressing need for further research on finding a metric that can measure/explain the task-effectiveness of communications with more accuracy.
And, scalability in task-oriented design is yet another central challenge to be addressed in future research.}
\appendices

\vspace{-4mm}
\section{Proof of Lemma \ref{lemma: quantization measure}} \label{append: proof of quantization measure}
\vspace{-2mm}
\label{append: proof of theorem: ABSA-1 is lossless}

\begin{proof}
Applying Adam's law on equation (\ref{eq: Joint Control and Communication Design (JCCD) problem}) yields
{\small
\begin{align}\label{eq: TBIC problem - adam's law applied}
& \underset{\pi}{\text{argmax }} 
& & \mathbb{E}_{p(\rc(t))}                   \Big{\{}
    \mathbb{E}_{p_{\pi^c, {\pi}^m}(\{\rtr\}_{t'}^{T'}|\bc(t))}                   \big{\{} 
   \bg(t') | \bc(t) 
\big{\}}   
\Big{\}}, ~ \text{s.t.} ~  
    |\mathcal{C}|\leq 2^R
\end{align}}
where $\bc(t)$ is generated by the communication policy $\pi^c$ and the joint pmf of the system's trajectory $\{tr\}_{t'}^{T'}$ is directly influenced by the action policy $\pi^m$. The conditional pmf $p_{\pi^c, {\pi}^m}(\{\rtr\}_{t'}^{T'}|\bc(t))$ is the joint probability of the trajectory of the system given the received communication $\bc(t)$ when policies $\pi^c(\cdot)$ and $ {\pi}^m(\cdot)$ are followed. We proceed by negating the equation (\ref{eq: TBIC problem - adam's law applied}) and adding a second term to the objective function which is constant with respect to the decision variables of the problem to have
{\small
\begin{align}\label{eq: TBIC problem - adam's law applied - 2}
& \underset{\pi^c}{\text{argmin }} 
& & \mathbb{E}_{p(\rs(t))}                   \Big{\{}
    \mathbb{E}_{p_{\pi^*}(\{\rtr\}_{t'}^{T'}|\rs(t))}                   \big{\{} 
   \bg(t') | \bs(t) 
\big{\}}   
\Big{\}}
- \\
&&& \mathbb{E}_{p(\rc(t))}                   \Big{\{}
    \mathbb{E}_{p_{\pi^c, {\pi}^m}(\{\rtr\}_{t'}^{T'}|\bc(t))}                   \big{\{} 
   \bg(t') | \bc(t) 
\big{\}}   
\Big{\}} , ~ \text{s.t.} ~  
    |\mathcal{C}|\leq 2^R. \notag
\end{align}}
We replace the conditional expectation of system return by the value function $V(\cdot)$, \cite{Suttonintroduction}(Ch. 3.5), and we will have
{\small
\begin{align}\label{eq: TBIC problem - adam's law applied - 3}
& \underset{\pi^c}{\text{argmin }} 
& & \mathbb{E}_{p(\rs(t))}                   \Big{\{}
    V^{\pi^*}\big(\bs(t) \big) 
\Big{\}}
- \mathbb{E}_{p(\rc(t))}                   \Big{\{}
    V^{ {\pi}^m}\big(\bc(t) \big)  
\Big{\}} ,
\notag\\
& \text{s.t.} & &  
    |\mathcal{C}|\leq 2^R.
\end{align}}
Note that the empirical joint distribution of $ \bc(t)$ can be obtained by following the communication policy $\pi^c$ on the empirical distribution of $\bs(t)$.
{\small
\begin{align}\label{eq: TBIC problem - adam's law applied - 3}
& \underset{\pi^c}{\text{argmin }} 
& & \mathbb{E}_{p(\rs(t))}                   \Big{\{}
    V^{\pi^*}\big(\bs(t) \big) 
\Big{\}}
- \mathbb{E}_{p(\rs(t))}                   \Big{\{}
    V^{ {\pi}^m}\big(\bc(t) \big)  
\Big{\}} ,
\notag\\
& \text{s.t.} & &  
    |\mathcal{C}|\leq 2^R.
\end{align}}
As $    V^{\pi^*}\big(\bs(t) \big) 
-                 
V^{ {\pi}^m}\big(\bc(t) \big)  \geq 0
$ is true for any $\bs(t) \in \mathcal{S}$, merging the two expectations results in
{\small
\begin{align}\label{eq: TBIC problem - adam's law applied - 4}
& \underset{\pi^c}{\text{argmin }} 
& & \mathbb{E}_{p(\rs(t))}                   \Big{|}
    V^{\pi^*}\big(\bs(t) \big) 
-                
    V^{ {\pi}^m}\big(\bc(t) \big)  
\Big{|} , ~~ \text{s.t.} ~~ |\mathcal{C}|\leq 2^R,
\end{align}}
which concludes the proof of the lemma.
\end{proof}

\section{Proof of Lemma \ref{Theorem: ABSA-1 is Lossless}} \label{append: proof: Theorem: ABSA-1 is Lossless}
\begin{proof}
We depart from the result of lemma \ref{lemma: quantization measure} - problem (\ref{eq: TBIC problem - adam's law applied - 4 - body}).  By taking the expectation over the empirical distribution of $\bs(t)$ and applying Bellman optimality equation, we obtain
{\small
\begin{align}\label{eq: TBIC problem - adam's law applied - 6}
& \underset{\pi}{\text{argmin }} 
& & \frac{1}{n} \sum_{t=1}^{n}                  \Big{|}
    Q^{\pi^*}\!\big(\bs(t), \pi^*(\bs(t)) \big)\!-\!            
    Q^{ {\pi}^m}\!\Big(\bc(t),  {\pi}^m\!\big( \pi^c(\bs(t))\big)\! \Big)\!  
\Big{|} ,
\notag\\
& \text{s.t.} & &  
    |\mathcal{C}|\leq 2^R, 
\end{align}}
where the vector $\pi^c(\bs(t))$ is of $N$ dimensions and its $i$-th element is $\bc_i(t)$. We proceed by plugging $\pi^{c,ABSA-1}(\cdot)$ and $\Pi^{m^*}$, according to the definition \ref{def: ABSA-1}, into the equation (\ref{eq: TBIC problem - adam's law applied - 6}) to obtain
\vspace{-1mm}
{\small
\begin{align}\label{eq: TBIC problem - adam's law applied - 7}
\frac{1}{n} \sum_{t=1}^{n}                  \Big{|}
    Q^{\pi^*}\big(\bs(t), \pi^*(\bs(t)) \big) 
-                
    Q^{\pi^*}\Big(\bc(t), {\pi}^*\big( \bs' \big) \Big)  
\Big{|} , 
\end{align}}
where $\bs' = \pi^{{c,ABSA-1}^{-1}}\Big(\pi^{c,ABSA-1}\big(\bs(t)\big)\Big)$, and any possible value for it lies in the same subset $  \mathcal{P}_{k'}$ as $\bs(t)$ does, while according to the definition of $ \mathcal{P}_{k'}$, we know $ \pi^*(\bs(t)) =  \pi^*(\bs')$, \textcolor{Mycolor3}{ if $|\mathcal{C}| \geq |\mathcal{M}|^N$}. Thus, by replacing $\pi^*(\bs')$ in with $\pi^*(\bs(t))$ in equation (\ref{eq: TBIC problem - adam's law applied - 7}) we get
{\small
\begin{align}\label{eq: TBIC problem - adam's law applied - 7}
\frac{1}{n} \sum_{t=1}^{n}                  \Big{|}
    Q^{\pi^*}\big(\bs(t), \pi^*(\bs(t)) \big) 
-                
    Q^{\pi^*}\Big(\bs(t), {\pi}^*\big( \bs(t) \big) \Big)  
\Big{|} = 0.
\end{align}}
This concludes the proof of theorem \ref{Theorem: ABSA-1 is Lossless}.

\end{proof}

\bibliographystyle{IEEEtran}
\vspace{-4mm}
{\small
\bibliography{Bibfile}}
\vspace{-0mm}
\end{document}